\DeclareMathAlphabet{\mathpzc}{T1}{pzc}{m}{it}
\crefname{lemma}{lemma}{lemmas}
\crefname{proposition}{proposition}{propositions}
\crefname{definition}{definition}{definitions}
\crefname{theorem}{theorem}{theorems}
\crefname{conjecture}{conjecture}{conjectures}
\crefname{corollary}{corollary}{corollaries}
\crefname{example}{example}{examples}
\crefname{section}{section}{sections}
\crefname{appendix}{appendix}{appendices}
\crefname{figure}{fig.}{figs.}
\crefname{equation}{eq.}{eqs.}
\crefname{table}{table}{tables}
\crefname{item}{property}{properties}
\crefname{remark}{remark}{remarks}
\newtheorem{theorem}{Theorem}
\newtheorem{definition}{Definition}
\newtheorem{lemma}{Lemma}
\newtheorem{remark}{Remark}
\newif\ifopenbraceleft
\newif\ifopenbraceright
\pgfqpoint{\pgfdecorationsegmentamplitude}{.5\pgfdecorationsegmentamplitude}}
\pgfqpoint{\pgfdecorationsegmentamplitude}{.5\pgfdecorationsegmentamplitude}}
\newcommand{\C} {\mathbb C}
\renewcommand{\H} {{\ensuremath{\mathcal H}}\xspace}
\newcommand{\B} {\mathcal B}
\newcommand{\1} {\ensuremath{\mathds 1}}
\newcommand{\op}[1] {\mathbf{#1}}
\newcommand{\set}[1] {\mathrm{#1}}
\newcommand{\field}[1] {\mathds{#1}}
\newcommand{\BQEXP}{\textnormal{\textsf{BQEXP}}\xspace}
\newcommand{\QMA}{\textnormal{\textsf{QMA}}\xspace}
\newcommand{\QMAEXP}{\textnormal{\textsf{QMA\textsubscript{EXP}}}\xspace}
\newcommand{\yes}{\textnormal{\textsf{YES}}\xspace}
\newcommand{\no}{\textnormal{\textsf{NO}}\xspace}
\DeclareDocumentCommand{\kdHam}{ o o }{\textsc{$(\IfValueTF{#1}{#1}{k},\IfValueTF{#2}{#2}{d})$-TILH-3D}\xspace}
\renewcommand{\L}{\Pi}
\newcommand{\Lyes}{\Pi_\yes}
\newcommand{\Lno}{\Pi_\no}
\DeclareMathOperator{\BigO}{O}
\DeclareMathOperator{\poly}{poly}
\newcommand{\lmin}{\lambda_\mathrm{min}}
\newcommand{\ii}{\mathrm{i}}
\tikzset{%
    lattice/.style={
        baseline=0,
        x={(-0.9848cm,0.1536cm)},
        y={(0.866cm,0.42cm)},
        z={(0cm,-1cm)},
        xscale=1.25,
        yscale=1.25,
        every path/.style={text centered, draw=black, line cap=round, miter limit=4.00, line width=0.700pt},
    },
    every node/.style={
        align=center
    },
    faint/.style={draw opacity=.5,fill opacity=.5},
    vertex/.style={%
        draw=#1,fill=#1
    },
    cross/.style={
        cross out,
        draw,
        minimum size=2*(3pt-\pgflinewidth), 
        inner sep=0pt,
        outer sep=0pt
    },
    intA/.style = {
        draw = Cerulean,
        opacity = .6,
        line width=12
    },
    intB/.style = {
        draw = MidnightBlue,
        opacity = .2,
        line width=12
    }
}
\DeclareDocumentCommand{\B}{ O{} m m m  }{%
    \draw[fill=black,draw=none,opacity=1/(2*(#2)*(#3)*(#4)+1),#1] (#2,#3,#4) circle (2pt);%
}
\DeclareDocumentCommand{\inline}{ m }{%
    \tikz[lattice,baseline=-0.7ex]{#1}\xspace%
}
\DeclareDocumentCommand{\Rx}{ O{} O{0} O{0} O{0} }{%
    \draw[vertex=red,opacity=1/(2*(#2)*(#3)*(#4)+1),#1] (#2,#3,#4) node[cross, line width=1pt] {};%
}
\DeclareDocumentCommand{\Ro}{ O{0} O{0} O{0} }{%
    \draw[draw=red,fill=none,opacity=1/(2*(#1)*(#2)*(#3)+1),fill=white] (#1,#2,#3) circle (1.6pt);%
}
\DeclareDocumentCommand{\R}{ O{} m m m }{%
    \draw[fill=red,draw=none,opacity=1/(2*(#2)*(#3)*(#4)+1),#1] (#2,#3,#4) circle (1.6pt);%
}
\DeclareDocumentCommand{\Ra}{ O{} m m m }{%
    \draw[fill=green!90!black,draw=none,opacity=1/(2*(#2)*(#3)*(#4)+1),#1] (#2,#3,#4) circle (1.6pt);%
}
\DeclareDocumentCommand{\$}{ O{white} m }{%
    \rlap{\contourlength{.5pt}\contour{#1}{#2}}%
    \rlap{\contourlength{1pt}\contour{#1}{#2}}%
    \rlap{\contourlength{1.5pt}\contour{#1}{#2}}%
    \contourlength{2pt}\contour{#1}{#2}%
}
\definecolor{tileRed}{HTML}{cc0000}
\DeclareDocumentCommand{\tile}{m m m m}{%
    \tikz[lattice,baseline=7pt]{
        \draw (0, 0, 0) -- (0, 1, 0) -- (1, 1, 0) -- (1, 0, 0) -- cycle;
        \begin{scope}[every path/.style={line width=.1pt}]
        % back
        \ifstrequal{#1}{1}{
            \draw[vertex=tileRed, draw=tileRed] (1, 1, 0) -- (0, 1, 0) -- (.5, .5, 0) -- cycle;
        }{}
        \node[black] at (.55, 1, 0) {\${#1}};
        % right
        \ifstrequal{#2}{1}{
            \draw[vertex=tileRed,  draw=tileRed] (0, 0, 0) -- (0, 1, 0) -- (.5, .5, 0) -- cycle;
        }{}
        \node[black] at (0, .5, 0) {\${#2}};
        % front
        \ifstrequal{#3}{1}{
            \draw[vertex=tileRed, draw=tileRed] (0, 0, 0) -- (1, 0, 0) -- (.5, .5, 0) -- cycle;
        }{}
        \node[black] at (.45, 0, 0) {\${#3}};
        % left
        \ifstrequal{#4}{1}{
            \draw[vertex=tileRed, draw=tileRed] (1, 0, 0) -- (1, 1, 0) -- (.5, .5, 0) -- cycle;
        }{}
        \node[black] at (1, .5, 0) {\${#4}};
        \end{scope}
    }%
}
\newcommand{\SL}{\blacktriangleleft}
\newcommand{\SR}{\blacktriangleright}
\newcommand{\Sr}{\ensuremath{\SR}\xspace}
\newcommand{\Sl}{\ensuremath{\SL}\xspace}
\definecolor{TODOcolorJohannes}{rgb}{1.0,.78,.26}
\definecolor{TODOcolorStephen}{rgb}{.78,1.0,.33}
\definecolor{TODOcolorall}{rgb}{1.,0.,.66}
\title{\vspace{-2cm}The Complexity of Translationally-Invariant Low-Dimensional Spin Lattices in 3D}
\author[1]{Johannes Bausch\thanks{jkrb2@cam.ac.uk}}
\author[2]{Stephen Piddock\thanks{stephen.piddock@bristol.ac.uk}}
\affil[1]{DAMTP, University of Cambridge}
\affil[2]{School of Mathematics, University of Bristol}
\begin{document}
\twocolumn[
\maketitle
\begin{abstract}
	In this paper, we consider spin systems in three spatial dimensions, and prove that the local Hamiltonian problem for 3D lattices with face-centered cubic unit cells, 4-local translationally-invariant interactions between spin-$3/2$ particles and open boundary conditions is \QMAEXP-complete.
	We go beyond a mere embedding of past hard 1D history state constructions, and utilize a classical Wang tiling problem as binary counter in order to translate one cube side length into a binary description for the verifier input.
    We further make use of a recently-developed computational model especially well-suited for history state constructions, and combine it with a specific circuit encoding shown to be universal for quantum computation.
	These novel techniques allow us to significantly lower the local spin dimension, surpassing the best translationally-invariant result to date by two orders of magnitude (in the number of degrees of freedom per coupling).
    This brings our models en par with the best non-translationally-invariant construction.
\end{abstract}
\vspace{.8cm}
]

\newpage

\section*{Introduction and Motivation}
Hamiltonian operators are used ubiquitously to describe physical properties of multi-body quantum systems,
and are of paramount interest for an array of disciplines ranging from theoretical computer science, to experimental and condensed matter physics.
While computer scientists are interested in the computational power of different models (e.g.\ Hamiltonian quantum computers), for physicists it is important to calculate the structure of the low-energy spectrum of quantum systems. One of the most basic, yet fundamental such question is to estimate the ground state energy of a many-body spin system with low-range interactions, formally known as the \emph{local Hamiltonian problem}.

Kitaev's seminal paper proving quantum-\textsf{NP}-hardness of the local Hamiltonian problem for the case that each interaction couples at most five spins \cite{Kitaev2002} motivated significant progress towards understanding the computational complexity that arises in different variants of the local Hamiltonian problem \cite{Kempe2006,Oliveira2008,Aharonov2009,Bravyi2006,Schuch2011,Hallgren2013,Bausch2016,Cubitt2013,Landau2013}.
These results are especially interesting from a computational perspective, answering which families of Hamiltonians are ``complicated enough'' to perform universal quantum computation \cite{Nagaj2008,Chen2011}.
Analysing the energy levels of the resulting hard instances often required the development of novel mathematical techniques, which are of independent interest e.g.\ in the context of spectral analysis of stochastic processes, or perturbation theory.
Yet from the perspective of experimental physics and material sciences, the resulting many-body quantum systems are too contrived to be of relevance; either the local spin dimension is vast, the coupling strengths vary from site to site, or the interaction graphs are not geometrically local.

Moreover, while 1D results are interesting and in a sense the most fundamental models to study (as any 1D hardness result directly implies hardness of the corresponding higher-dimensional constructions), most condensed matter systems are in fact two- or three-dimensional, and the comparison of local dimension between the best non-translationally invariant results in 1D and 2D ---8 \cite{Hallgren2013}, and 2 \cite{Oliveira2008}, respectively---indicates that moving beyond 1D allows a significant reduction of the lattice spins' dimension. It is thus a natural question to ask whether one can go beyond a simple reduction from previously-known 1D results, by exploiting these extra dimensions in a non-trivial way (i.e.\ beyond a simple embedding), but at the same time retaining nice physical properties such as a regular lattice structure and translational symmetries. We can even go further: is there a family of Hamiltonians on a physically realistic 3D crystal lattice with a QMA-hard ground state? This question is highly relevant, since such crystal structures are found ubiquitously in nature (e.g.~face-centered cubic lattices for sodium chloride, or body-centered cubic cesium chloride crystals).

\emph{
    In this paper, we prove that the local Hamiltonian problem remains computationally hard, even for a face-centered cubic lattice of spin-3/2 particles with geometrically 4-local translationally-invariant interactions, and open boundary conditions.
}

It is clear that there is always a trade-off between local dimension and interaction range: a Hermitian operator coupling $k$ spins of dimension $d$ each has $d^{2k}$ real degrees of freedom.
In 1D and for 2-local interactions, the best-known construction to date is \cite{Hallgren2013} with 8-dimensional qudits and nearest-neighbour interactions; for each coupled pair of qudits, one Hermitian operator thus has $8^2\times 8^2=16384$ free real parameters.
Enforcing translational invariance, we can regard e.g.\ \cite{Bausch2016}---nearest-neighbour interactions between spins of dimension $\approx50$---which would give roughly $(50^2)^2\approx6\times10^6$ parameters to choose from.

The construction we propose in this paper with at most 4-local interactions between spins of dimension 4 yields $4^8$ degrees of freedom, a roughly two orders-of-magnitude improvement over a straightforward embedding of the best one-dimensional construction, and en par with the best non-translationally-invariant result. It also shows that there is only about three orders of magnitude left between this construction and spin systems that we encounter every day (e.g.\ nearest-neighbour, spin 1).

\section*{Main Result}
The family of spin systems we study are described by a Hamiltonian on a face-centered cubic (cF) lattice as shown in \cref{fig:fcc}.
More precisely, we start with a finite cubic lattice $\Lambda$, where each vertex \emph{and each face} carries a 4-dimensional spin $\mathcal H_\text{loc}=\field C^4$; the overall Hilbert space $\mathcal H$ is then the tensor product of all spins.
For a geometrically local Hamiltonian $\op h$ acting on $k$ neighbouring spins (on vertices, faces, or both), we denote with $\op h^{\vec x}$ the $k$-local operator $\op h$ when offset by a lattice vector $\vec x\in\Lambda$, and acting trivially everywhere else; in case that $\op h^{\vec x}$ protrudes out of $\Lambda$, we set $\op h^{\vec x}\equiv\op 0$.
For a finite index set $I$, we consider Hamiltonians of the form
\begin{equation}\label{eq:ham}
    \op H = \sum_{i\in I}\left(c_i\sum_{\vec x\in\Lambda}\op h_i^{\vec x}\right),
\end{equation}
where each $\op h_i^{\vec x}$ couples at most $4$ spins, either within a single unit cell, or between neighbouring unit cells.
By construction, this Hamiltonian is translationally-invariant, and features open boundary conditions since we do not place special interactions at faces, edges or corners of the lattice cuboid.

The index set $I$ does not depend on the size of the lattice, and neither do any of the $\op h_i$;
we allow the $c_i=c_i(|\Lambda|)$ to depend on the system size $|\Lambda|=W\times H\times D$, but require any $c_i/c_j\in[\Omega(1/\poly|\Lambda|),\BigO(\poly|\Lambda|)]$.
This allows us to define a variant of the local Hamiltonian problem where the input is given by a description of the local terms of a Hamiltonian as in \cref{eq:ham} (i.e.\ the matrix entries of the local terms $c_i\times\op h_i$, up to polynomial precision), as well as the three side-lengths $W, H$ and $D$ of the lattice.
Moreover, we are given two parameters $\alpha<\beta$ satisfying $\beta-\alpha=\Omega(1/\poly |\Lambda|)$, and a promise that the ground state energy of $\op H$ is either smaller than $\alpha$, or larger than $\beta$.
The local Hamiltonian problem is then precisely the question of distinguishing between these two cases, and we prove the following main theorem.
\begin{theorem}\label{th:main}
The local Hamiltonian problem is \QMAEXP-complete, even for translationally-invariant 4-local interactions on a 3D face-centered cubic spin lattice (\cref{fig:fcc}) with local dimension 4, and open boundary conditions.
\end{theorem}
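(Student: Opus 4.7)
The plan is the standard two-way inclusion for \QMAEXP-completeness. Containment in \QMAEXP is essentially routine: the verifier receives the $\poly(|\Lambda|)$-bit succinct description of the local terms of $\op H$ in \cref{eq:ham} together with the three side-lengths $W,H,D$ in binary. Since $|\Lambda|=WHD$, an exponential-time Hilbert-space budget is available, so the verifier runs phase estimation on the explicit matrix of $\op H$ applied to the witness and accepts when the estimated energy lies below $\alpha$; the promised gap $\beta-\alpha=\Omega(1/\poly|\Lambda|)$ ensures that polynomially many ancilla bits of precision suffice.

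For \QMAEXP-hardness I would reduce from the translationally-invariant 1D local Hamiltonian problem of Gottesman and Irani, whose only input is a unary description of the chain length $N$. The idea is to use the three spatial directions for distinct roles. Along the $D$-axis I would lay down a Feynman--Kitaev history state on the vertex qudits whose clock register propagates a universal circuit, employing the circuit encoding and history-state computational model alluded to in the abstract, both picked to be compatible with local dimension $4$. The two remaining directions (spanned by the face qudits of the fcc cell) carry a classical 2D Wang tiling designed so that its unique ground configurations implement a binary counter that converts the geometric side-length of the cuboid into a binary string. This binary string is then read out by the clock-controlled gates of the history state and serves as the \emph{actual} verifier input, so that unary geometric information about the lattice is translated into the binary input expected by the \QMAEXP verification circuit.

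The construction then splits into the usual family of translationally-invariant penalty terms, all packaged as geometrically $4$-local interactions on the fcc unit cell: a \textbf{tiling term} whose violations cost $\Omega(1)$ and whose zero-energy configurations on the face qudits encode the counter; an \textbf{initialisation/output term} enforcing that the circuit input matches the tiling counter and that the output qubit reads accept; and a \textbf{propagation term} in Feynman--Kitaev form coupling the clock and data registers on the vertex qudits. These are added with size-dependent coefficients $c_i=c_i(|\Lambda|)$ obeying the polynomial-ratio constraint of the problem statement.

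The hardest part, I expect, is the global spectral analysis. Individually, each layer (tiling, propagation, boundary penalties) has standard nullspace and gap bounds, but here they must cohabit under the tight constraints of $4$-locality, local dimension $4$, and translational invariance, without the freedom of site-specific interactions. I would proceed in stages: first compute the ground space of the tiling Hamiltonian alone, then within that ground space apply Kitaev's geometrical lemma to the Feynman--Kitaev clock propagator restricted to valid tilings, and finally use a nullspace projection/Schrieffer--Wolff-type argument to show that the boundary penalty lifts the degeneracy between histories of accepting and rejecting computations, yielding a promise gap $\beta-\alpha=\Omega(1/\poly|\Lambda|)$. Care must be taken that the face--vertex coupling that transmits the binary counter into the history state is itself $4$-local and does not spoil the block structure on which the above perturbative analysis relies.
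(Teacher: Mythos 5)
Your high-level picture — translate the unary lattice size into a binary verifier input via a tiling, then run a Feynman--Kitaev history state that reads that input — is the right skeleton, and your containment sketch is fine. But several concrete ingredients of the paper's argument are missing or misplaced, and as stated the plan would run into trouble.

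First, the geometry is off. You propose that the history state propagates ``along the $D$-axis'' while the Wang tiling lives in the ``two remaining directions.'' In the paper the binary counter tiling lives on the \emph{top face}, which spans $D\times W$, and converts the depth $D$ into a binary string on the top front edge; that string is then wound diagonally around all four side faces. The history state evolves on the \emph{surface} of the cuboid: the ``head'' circulates around the perimeter of length $\approx W+D$, and one circuit of the head corresponds to one downward step in $H$, so the height $H$ plays the role of time. This circular tape is essential because the embedded computational model is a \emph{Quantum Ring Machine}, not a generic circuit; the cyclic geometry is what lets a fixed-size translationally-invariant interaction emulate repeated application of the QRM head. Your axis-decoupled picture does not give this and would need the program string to be accessible by a head that does not wrap around.

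Second, you treat the propagation term as a standard Feynman--Kitaev clock and invoke the ordinary Kitaev geometrical lemma. In the actual construction the local transition rules are not globally unambiguous — there are genuine branches (e.g.\ the ``corner'' rule can fire on a face and push the arrow one layer into the bulk, which is then only reversible). The propagation Hamiltonian is therefore the Laplacian of a \emph{unitary labeled graph} rather than a path, and the paper relies on a variant of Kitaev's lemma for simple ULGs (their \cref{lem:kitaev-ulgs}), after first showing the branches are benign (\cref{lem:branching-1,lem:branching-2}). The Schrieffer--Wolff-type argument you suggest is not used and is not needed; what is needed is the ULG machinery plus a head-counting bonus term to handle the zero-head and multi-head sectors.

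Third, ``reduce from Gottesman--Irani'' is a red herring and is exactly what the paper sets out to avoid — a mere embedding of the 1D result gives a much larger local dimension. The hardness reduction is directly from a \QMAEXP verifier, converted first to a QRM (\cref{lem:qmaexp-qrm}), then approximated by a circuit over a single fixed two-qubit gate $\op G$ and its inverse applied only to adjacent qubits; the universality of that restricted gate (\cref{lem:single-gate-universality}) and the error/size bookkeeping that fixes the cube parameters (\cref{lem:which-cube}) are load-bearing steps you have not accounted for. Without the QRM model and the adjacent-only universal gate, there is no obvious way to fit the program control into spin dimension $4$ with $4$-local terms.

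Finally, you are missing the ``distinguishing faces with open boundary conditions'' step: the diagonal bonus/penalty terms that force the bulk red sublattice into a unique marker state so that all subsequent rules can be conditioned to apply only on the top face, the side faces, or the computation edge, even though the Hamiltonian is fully translationally invariant with no special boundary terms. This static layer (the Hamiltonian $\Hstat$) is what makes the geometry-dependent interpretation you want actually available to local rules.
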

\QMAEXP is similar to \QMA, the quantum analogue of \textsf{NP}, but with an exponential-time verifier
instead of polynomial-time---a necessary technicality for any translationally invariant result \cite{Gottesman2009, Bausch2016}, since an $n$-qudit instance can only encode $\poly(\log n)$ bits of information (in this case the side lengths of the lattice, which encode the input in unary).
In essence, while a \QMAEXP-hard problem can be \emph{verified} in exponential time on a quantum computer, just as in the \textsf{P} vs.\ \textsf{NP} case it is not expected to be \emph{solved} as efficiently (see \cref{sec:qmaexp} for details).

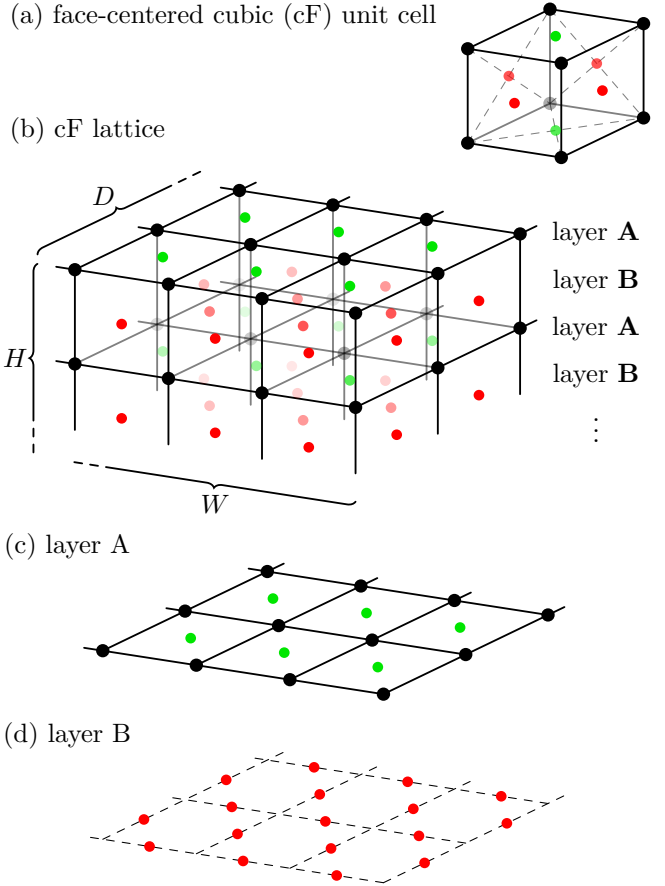
\begin{figure}
\raggedright
\begin{tikzpicture}[    
	lattice
]
    \node[right] at (3.8, 0, 0) [yshift=3.2cm] {(a) face-centered cubic (cF) unit cell};

    % unit cell
    \coordinate (translate) at (0, 2.5, -1.6);
    \begin{scope}[shift={(translate)}]
        \coordinate (At) at (1, 1, 0);
        \coordinate (Ar) at (0, 1, 0);
        \coordinate (Ab) at (0, 0, 0);
        \coordinate (Al) at (1, 0, 0);
        \coordinate (Bt) at (1, 1, 1);
        \coordinate (Br) at (0, 1, 1);
        \coordinate (Bb) at (0, 0, 1);
        \coordinate (Bl) at (1, 0, 1);
        \draw (Ab) -- (Ar) -- (Br) -- (Bb) -- cycle;
        \draw (Ab) -- (Al) -- (Bl) -- (Bb);
        \draw (Al) -- (At) -- (Ar);
        \draw[faint] (Bl) -- (Bt) -- (Br) (Bt) -- (At);
        \draw[faint,thin,dashed]
            (Bl) -- (At) (Al) -- (Bt)
            (Bl) -- (Br) (Bb) -- (Bt)
            (Bt) -- (Ar) (At) -- (Br);
        \foreach \x/\y in {.5/0, .5/1, 0/.5, 1/.5} \R{\x}{\y}{.5};
        \foreach \x in {0, 1} \foreach \y in {0, 1} \foreach \z in {0, 1} \B \x \y \z;
        \Ra{.5}{.5}{0}; \Ra{.5}{.5}{1};
    \end{scope}
    
    % full lattice
    \node[right] at (3.8, 0, 0) [yshift=1.7cm] {(b) cF lattice};
    
    % vertical lines
    \foreach \x/\y in {1/1,1/2,2/1,2/2,3/1,3/2}
        \draw[faint] (\x,\y,0) -- (\x,\y,1.7);
    \foreach \x/\y in {0/0,1/0,2/0,3/0,0/1,0/2}
        \draw (\x,\y,0) -- (\x,\y,1.7);
    
    % lines in x direction
    \foreach \y/\z in {1/1,2/1}
        \draw[faint] (0,\y,\z) -- (3.2,\y,\z);
    \foreach \y/\z in {0/0,1/0,2/0,0/1}
        \draw (0,\y,\z) -- (3.2,\y,\z);
        
    % lines in z direction
    \foreach \x/\z in {1/1,2/1,3/1}
        \draw[faint] (\x,0,\z) -- (\x,2.2,\z);
    \foreach \x/\z in {0/0,1/0,2/0,3/0,0/1}
        \draw (\x,0,\z) -- (\x,2.2,\z);
        
    % black dots
    \foreach \x in {0,...,3}
        \foreach \y in {0,1,2}
            \foreach \z in {0,1}
                \B \x \y \z;
    
    % red dots
    \foreach \x in {0,1,2}
        \foreach \y in {0,1}
            \foreach \z in {0,1} {
                \Ra{\x+.5}{\y+.5}{\z};
                \R{\x}{\y+.5}{\z+.5};
                \R{\x+.5}{\y}{\z+.5};
            }

    % labels
    \node at (0, 2, 0) [xshift=1cm] {layer \textbf{A}};
    \node at (0, 2, .5) [xshift=1cm] {layer \textbf{B}};
    \node at (0, 2, 1) [xshift=1cm] {layer \textbf{A}};
    \node at (0, 2, 1.5) [xshift=1cm] {layer \textbf{B}};
    \node at (0, 2, 2) [xshift=1cm] {\vdots};
    
    \draw[decorate,decoration={open brace,amplitude=3pt,mirror,open=right}] (3.4,2,-.1) -- (3.4,0,-.1) node[midway,above,xshift=-6] {$D$};
    \draw[decorate,decoration={open brace,amplitude=3pt,mirror,open=left}] (3.4,0,0) -- (3.4,0,2) node[midway,left] {$H$};
    \draw[decorate,decoration={open brace,amplitude=3pt,mirror,open=right}] (3,0,2) -- (0,0,2) node[midway,below,yshift=-2] {$W$};
\end{tikzpicture}\hfill\\
\begin{tikzpicture}[lattice]
    \node at (-2.5, 0, 0) [yshift=2cm] {(c) layer {A}};
    \node at (-2.5, 0, 0) [yshift=-.5cm] {(d) layer {B}};

    % layer A
    \foreach \x in {-5,...,-2} {
        \draw (\x,1) -- (\x,3.2);
        \foreach \y in {1,2,3}
            \B \x \y 0;
    }
    \foreach \y in {1,2,3}
        \draw (-5,\y,0) -- (-1.8,\y,0);
    \foreach \x in {-5,...,-3}
        \foreach \y in {1,2}
            \Ra[opacity=1]{\x+.5}{\y+.5}{0};
            
    % layer B
    \foreach \x in {-5,...,-2} {
        \draw[thin,dashed] (\x,1,2) -- (\x,3.2,2);
        \foreach \y in {1,2}
            \R[opacity=1]{\x}{\y+.5}{2};
    }
    \foreach \y in {1,2,3}
        \draw[thin,dashed] (-5,\y,2) -- (-1.8,\y,2);
    \foreach \x in {-5,...,-3}
        \foreach \y in {1,2,3}
            \R[opacity=1]{\x+.5}{\y}{2};
\end{tikzpicture}
\caption{Face-centered cubic crystal lattice. All vertices and faces carry spin-3/2 particles; the red and green sublattice spins sit on the faces defined by the black lattice.}
\label{fig:fcc}
\end{figure}

We give a rigorous proof of \cref{th:main} in \cref{sec:proof}; in the following, we want to give a high-level exposition of the ideas and proof techniques which we employ. As in past hardness results, we present an explicit construction of a family of \QMAEXP-hard instances of this variant of the local Hamiltonian problem.  We will make use of two types of local terms, tiling and history state Hamiltonians, both of which have been studied extensively, but mostly independently of each other. In our work, we will utilize each method to its strength: the classical tiling terms will be used to encode the bootstrapping mechanism responsible for the large local dimension in prior work, while the history state terms will be used as a means of embedding the quantum computation part.
First we will briefly recap these methods.

\paragraph{History State Construction.}
By definition, a promise problem $\Pi$ is in \QMAEXP if there exists a \BQEXP quantum circuit---called ``verifier''---such that for any \yes-instance $l\in\Pi$, there exists a poly-sized quantum state---called ``witness''---which the verifier accepts with probability $\ge2/3$; or if $l$ is a \no-instance, all poly-sized witnesses are rejected with high probability. 
The exact constant used here is not important, as for any polynomial $p$, one can always amplify a \QMAEXP promise problem such that the distinction works with probability $\ge1-1/3^{p(|l|)}$ for an instance $l\in\Pi$ with size $|l|$ (cf.\ \cref{rem:probability-amp}).

\begin{figure}[t]
  \centering
  \def\W{8}
  \def\w{1}
  \def\h{0.8}
  \def\gx{0.4}
  \def\gy{0.65}
  \begin{tikzpicture}[
      box/.style = {fill=white,draw = black,rounded corners=.5,line width=.8pt},
      snake/.style = {},
      snake2/.style = {decorate, decoration = {zigzag, amplitude = 4pt, segment length = 25pt}}
    ]

    \begin{scope}
        \clip [snake] (-0.2,0.5*\h) rectangle (\W+0.2,-3.5*\h);
        \clip [snake2] (.2,2*\h) rectangle (\W-.2,-5*\h);
    
        \newcommand{\gate}[2]{
          \path[box] (#1*\w-\gx,-#2*\h-0.5*\h-\gy) rectangle (#1*\w+\gx,-#2*\h-0.5*\h+\gy);
          \node at (#1*\w,-#2*\h-0.5*\h) {$\op R$};
        }
    
        \foreach \i in {0,...,4} {
          \draw[line width=.8pt,double] (0,-\i*\h-.1) -- (\W,-\i*\h-.1);
          \draw[line width=.8pt] (0,-\i*\h+.1) -- (\W,-\i*\h+.1);
        }
    
        \gate{0}{3}
        \gate{0}{-1}
        \gate{1}{0}
        \gate{2}{1}
        \gate{3}{2}
        \gate{4}{3}
        \gate{4}{-1}
        \gate{5}{0}
        \gate{6}{1}
        \gate{7}{2}
        \gate{8}{3}
        \gate{8}{-1}
    \end{scope}
    \foreach \i in {0,...,3} {
        \node at (-.5,-\i*\h) {$\field C^d$};
    }
  \end{tikzpicture}
  \caption{Circuit diagram of a QRM with a ring of four dimension $d$ qudits. The intuition behind proving universality for QRMs is to encode a classical (reversible) Turing machine's action into the unitary $\op R$; depending on the internal state---which is stored on classical lanes of the circuit (double lines)---a controlled unitary is applied to the pair of qubits stored on the quantum lanes (single lines).  Special flags also stored on the classical lanes indicate where the unitary $\op R$ acts non-trivially in its next round. In this way, the Turing machine can ``write out'' and apply a uniform family of quantum circuits in one go. QRMs are thus quantum Turing complete for a uniform complexity class, given that the ring scales sufficiently quickly with the input.}
  \label{fig:qrm-circuit}
\end{figure}

We further know that for any \QMAEXP promise problem, we can alternatively obtain a so-called \emph{quantum ring machine} (QRM) as verifier (\cref{lem:qmaexp-qrm}). In brief, a QRM is a fixed unitary $\op R$ on $(\field C^d)^{\otimes 2}$, which acts cyclicly on a ring of $n$ dimension $d$ qudits. Borrowing terminology from Turing machines (TM)---which are used to prove universality of the QRM model---we call the unitary $\op R$ the \emph{head} of the QRM, and the qudit ring is essentially a TM tape with cyclic boundary conditions. \Cref{fig:qrm-circuit} depicts such a QRM and its action in circuit notation.

We take a specific 2-qubit quantum gate $\op G$ and prove it to be universal, even when only applied to adjacent qubits (\cref{lem:single-gate-universality}).
Together with its inverse $\op G^\dagger$, we can thus use Solovay-Kitaev to approximate the QRM head unitary $\op R$ to within precision $\epsilon$.
Since we require that the QRM first writes out an instance $l\in\Pi$ on the ring, the resulting circuit $C_{\op R}$ has size $|C_{\op R}|=\BigO(\poly|l|\log^4(1/\epsilon))$, cf.\ \cref{lem:which-cube}. To match the QRM evolution, we repeatedly apply $C_{\op R}$ in a cyclic fashion, as described in \cref{fig:qrm-circuit}.

Keeping with tradition, we encode the circuit $C_{\op R}$ as a so-called \emph{history state} Hamiltonian. In its simplest form, such a Hamiltonian encodes transitions for each gate $\op U_i$ present in $C_{\op R}=\op U_T\cdots\op U_1$. More specifically, on the Hilbert space $\field C^{T+1}\otimes(\field C^d)^{\otimes n}$ and a basis $\{\ket j\}_j$ on $\field C^{T+1}$, we define
\newcommand\prop{\mathrm{prop}}
\begin{equation}\label{eq:histHam}
    \op H_\prop:=\sum_{t=0}^{T-1}\sum_{j}(\ket t\otimes\ket j-\ket{t+1}\otimes\op U_t\ket j)(\ \text{h.c.}\ ),
\end{equation}
where the first component of the Hilbert space stores a clock index taking track of the current step within the computation. One can verify that $\ker(\op H_\prop)$ is spanned by states of the form
\[
    \ket\Psi=\sum_{t=0}^T\ket t\otimes\ket{\psi_t}:=\sum_{t=0}^T\ket t\otimes\op U_t\cdots\op U_1\ket\phi
\]
for any initial states $\ket\phi\in(\field C^d)^{\otimes n}$. We say that the ground state is spanned by states encoding the ``history'' of the computation, meaning that the state of the computation after $t$ steps---$\ket{\psi_t}$---is entangled with the ``time'' register $\ket t$ (we want to point out, however, that this is a static problem, and the analogy with time steps is purely educational).

A large part of the overhead in terms of local dimension or interaction range present in prior constructions is due to the fact that the terms in \cref{eq:histHam} are not necessarily local. A common approach to construct a local clock is to subdivide each computational step from $\ket{\psi_t}\longmapsto\op U_t\ket{\psi_t}=\ket{\psi_{t+1}}$ into multiple intermediate steps
$$
\ket{\psi_t}\longmapsto
\ket{\psi_{t,1}}\longmapsto\ldots\longmapsto
\ket{\psi_{t,s_t}}\longmapsto
\ket{\psi_{t+1}},
$$
where no quantum gate is applied, but where some internal reordering takes place which allows each transition to act on neighbouring spins only.
This allows each gate operation in \cref{eq:histHam} to be written as a local interaction.
However, the problem remains that for each local transition rule, in order to know which gate to apply next, one has to be able to identify the current computational step locally and unambiguously (for an extensive discussion cf.\ \cite[introd.]{Bausch2016}).
Knowing when to apply which transition rule thus requires a potentially large local Hilbert space dimension, or long-range interactions.

Quantum ring machines circumnavigate part of this problem, as only a potentially much smaller circuit $C_{\op R}$ has to be applied in a periodic fashion.
However, we still need to locally store the current step \emph{within} the circuit $C_{\op R}$.
In the next section, we explain how we use diagonal Hamiltonian terms to constrain the ground space of our Hamiltonian such that the circuit description for $C_{\op R}$ is exposed at the front edge of the cuboid, in a periodically repeating fashion (cf.\ \cref{fig:cube-structure}).
More precisely, we define a diagonal Hamiltonian $\op H_\text{cl}$ with spectral gap 1, and a degenerate ground space for which any ground state of $\op H_\text{cl}+\op H_\prop$ will then be in a product configuration $\ket{\Phi_\text{cl}}\otimes\ket{\Psi}$. Here $\ket{\Phi_\text{cl}}$ is a classical product state that takes a configuration as in \cref{fig:cube-structure}: in particular a string describing $C_{\op R}$ is expressed, periodically, on the front edge.

Local terms as in \cref{eq:histHam} can then be used to access this circuit description \emph{without} any explicit knowledge of the current position within the circuit, which is implicitly given by the location on the cube where the transition rule is applied.

\paragraph{Tiling Construction.}
A tiling Hamiltonian is a local Hamiltonian on a lattice, where each term is a projector onto the complement of the allowed tiles at a specific lattice location (cf.\ e.g.\ \cite{Bausch2015}). As a simple example, consider just the 2D layer B-type sublattice from \cref{fig:fcc}, and assume that every spin is a qubit. We denote with white the state $\ket0$, and with red shading the state $\ket1$. Assume the only tiles we want to allow are the four shown in \cref{fig:cube-structure} (without rotated variants).

By writing a local term for each tile (where we order the corresponding Hilbert space $(\field C^2)^{\otimes 4}$ as a tensor product of the spin on the back, right, front, and left, respectively), we can write a diagonal projector $\op h=\1-\sum_{i=1}^4\op h_i$ on $(\field C^2)^{\otimes 4}$ such that the ground space is spanned by quantum states corresponding to the valid tiles; as an example, for the fourth tile, we write
\begin{align*}
    \op h_4&:=\ketbra1 \otimes \ketbra1 \otimes \ketbra0 \otimes \ketbra1.
\end{align*}
We can thus easily define a local Hamiltonian on the layer B-type sublattice which in its zero energy ground state encodes \emph{valid} tiling patterns, where adjacent edges match, if possible; if not, the ground state energy of the Hamiltonian will be at least 1.  More specifically, if $P$ indexes all squares with four adjacent spins, then we can write the Hamiltonian as
\begin{equation}\label{eq:static-constraint}
    \op H_\text{tiling}:=\sum_{\vec p\in P}\left(\1^{\vec p} - \sum_{i=1}^4\op h_i^{\vec p}\right)\otimes\1
\end{equation}
where $\op h_i^{\vec p}\otimes\1$ acts non-trivially only on the spins sitting on the edges of square $\vec p$.
For the aforementioned tiles, the resulting pattern is a binary counter, which can be used to translate the depth of a lattice, $D$, into a binary string representation of $D$ at the front edge (cf.\ top face in \cref{fig:cube-structure}).

The same method can equivalently be used to enforce a more complicated tiling pattern in three dimensions, especially when mixing penalty terms with different weights; for an extensive proof that the corresponding Hamiltonian ground space is indeed spanned by the best possible tiling we refer the reader to \cite[appdx.]{Bausch2015}.

\paragraph{Hard Instances for the Local Hamiltonian Problem.}
\begin{figure}[ht]
\begin{tikzpicture}
       \node[anchor=south west,inner sep=0] at (0,0) {\includegraphics[width=\columnwidth]{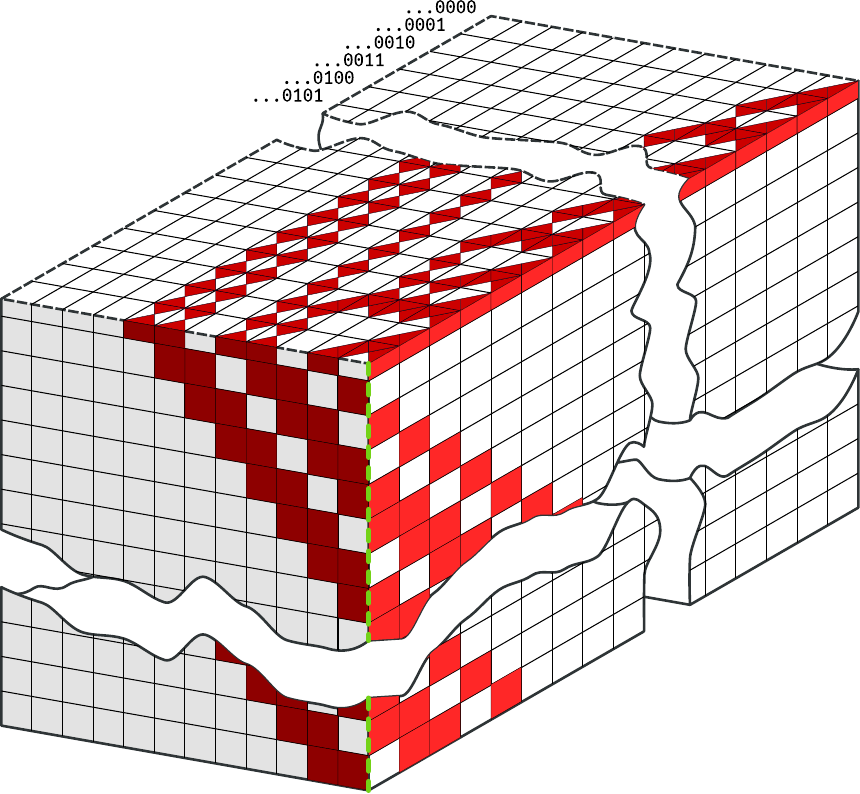}};
       \node[anchor=west] at (5,.11) {\${execution terminates}};
       \node[anchor=east,rotate=90] at (4.2,4) {\${computation edge}};
       \node[anchor=west] at (.1,4.8) {\shortstack[l]{\${binary program}\\\${description}}};
       \node[anchor=west] at (.1,8) {\shortstack[l]{\${valid tiling of plane}\\\${is a binary counter}\\\${from 0 to $D$}}};
       \node[anchor=east] at (7.2,8) {$W$};
       \node[anchor=east] at (8.7,5.2) {\${$H$}};
       \node[anchor=east] at (7.4,1.5) {$D$};
       \draw[<-,black] (4.2, .1) -- (5, .1);
\end{tikzpicture}
\caption{Structure of the ground state imposed by classical bonus and penalty terms.
Shown here is the lattice as in \cref{fig:fcc}; a cut through the top layer of the cuboid shows the layer B red sublattice depicted in fig.\ \hyperref[fig:fcc]{1 (d)}.
Coloured triangles on the top layer denote a spin on the tile edge in configuration $\ket 1$, a white tile edge stands for configuration $\ket0$; the tiles used on the top layer are the following four:
{\upshape
$$\tile0000\!\!\!
\tile0110\!\!\!
\tile1010\!\!\!
\tile1101$$}
The same colour coding is used for the squares around the sides, which label the red cF spins on the sides of the unit cells.
The dashed green front edge denotes the computation edge, where gates will be applied in the history state construction.
Observe how the same binary pattern is repeated periodically along the computation edge.}
\label{fig:cube-structure}
\end{figure}

We will now explain how these two techniques---tiling and history state Hamiltonians---can be combined in order to prove \cref{th:main}.
As a first step, we define a tiling pattern to constrain all red layer B spins of the cube---apart from the top and side layers, but including the bottom layer---to a specific symbol which is used nowhere else, and which we denote with \inline\Rx.
All the following terms can then be conditioned on these red spins being either in state \inline\Rx, or not;
this allows us to distinguish between the different faces of the cuboid in a translationally-invariant way and with open boundary conditions.
This technique is commonly used in 1D (e.g.\ \cite{Gottesman2009}), and we extend it to three dimensions.

As explained in the last section, we then define four tiles which self-assemble to a binary counter; this allows us to translates the depth of the cube $D$ to a string representation of $D$ on the top front edge.
Using similar tiles on the sides of the lattice, we wind this binary string down and around the cube in an anti-clockwise direction; like that, the string---which is the binary program description of the QRM circuit $C_{\op R}$---is expressed periodically on the front edge of the lattice, which we label the \emph{computation edge}, cf.\ \cref{fig:cube-structure}.

\newcommand\Hstat{\op H_\text{stat}}
We further restrict the spins in the green layer A sublattice adjacent to this computation edge to be in a state corresponding to successive pairs of program bits. For example, if the binary program description is $p_1,p_2,p_3,p_4$, the green spins depend on $(0,p_1), (p_1,p_2), (p_2,p_3), (p_3,p_4)$ and $(p_4,0)$ respectively.
A special encoding (cf.\ \cref{tab:encoding}) allows us to translate any such binary pair into an operation to perform on the computation edge.
All constraints up to this point are diagonal in the computational basis and at most 4-local; we collect all these \emph{static} terms on the cF lattice in the Hamiltonian $\Hstat$.

In order to execute the circuit encoded by the binary string, we will assume that we are working in the ground space of $\Hstat$; any other states necessarily have energy $\ge1$.
On the black layer A spins, we partition the Hilbert space $\mathcal H_\text{loc}$ into $\field C^2\oplus\field C^2$; each spin either stores a qubit $\ket q$, or it indicates one of two ``mover'' symbols \Sl and \Sr.

We write transition rules for the two arrows, which move them around the cube according to their direction, while staying on the same layer (cf.\ \cref{sec:moving-constraints}).
Any qubit in their path is pushed down to the next layer and cycled one to the right if passed by \Sr, or one to the left if passed by \Sl.
Once an arrow arrives at the computation edge, a transition rule conditioned on the program bit pairs $(p_i,p_{i+1})$ (accessible through the green spins) performs the corresponding computational step on the two adjacent qubits.
The arrow is then re-set to the next lower level, and the whole procedure repeats.
Once the arrow returns to the computational edge and is at the bottom-most layer, there is no further forward transition; the program terminates.

Symbolically, the operations we can perform with this basic set of instructions are the following ones.
We have a quantum state of $N$ qubits $\ket{q_1}\ket{q_2}\cdots\ket{q_N}$.
In one step, we can either\ldots
\begin{enumerate}
	\item cycle the qubits clockwise, to $\ket{q_N}\ket{q_1}\cdots\ket{q_{N-1}}$,
	\item cycle them anti-clockwise, to $\ket{q_2}\cdots\ket{q_N}\ket{q_1}$,
	\item perform a universal two-qubit quantum gate $\op G$ on the first two qubits,
	\item or perform the inverse of this gate, i.e.\ $\op G^\dagger$.
\end{enumerate}
We prove in \cref{lem:single-gate-universality} that there exists such a gate $\op G$ which is universal for quantum computation, even if only applied to adjacent qubits.
Analogously to before, we collect all history state terms in the Hamiltonian $\op H_\text{hist}$.

For us, the lattice instances of interest are the ones where the binary string corresponds to a circuit approximating the head of a \QMAEXP verifier QRM, i.e.\ $C_{\op R}$ (the fact that most program strings do not represent such a QRM is not important).
In \cref{lem:which-cube} we perform a careful analysis of the approximation errors, and show that one can indeed choose height, width and depth of the lattice (depth corresponding to the encoded program, width to the ring size, and height to the run time of the verifier) such that the history state corresponds to a witness verification for any instance $l\in\Pi$, where $\Pi$ can be any promise problem in \QMAEXP.

What remains to be done is to penalize invalid history state configurations, such as multiple active symbols, or no active symbol; collect those terms in an operator $\op P$, which is the only one which will make use of the scaling freedom given in \cref{eq:histHam}.
Finally, an input penalty $\Pi_\text{in}$ for the computation ensures that some ancillas are correctly initialized for the computation, and the output penalty $\Pi_\text{out}$ raises the lowest energy for \no-instances.

Since our history state has branches (since not all transition rules we write down are completely unambiguous), we have to show that $\op H_\prop$ defines a so-called unitary labeled graph Laplacian and invoke a recently-proven variant of Kitaev's geometrical lemma for this case, \cref{lem:kitaev-ulgs}.
With a rigorous proof in \cref{sec:proof}, we can thus show that the overall 4-local translationally-invariant Hamiltonian
$$
\op H:=\Hstat + \op H_\prop+\op P+\Pi_\text{in} + \Pi_\text{out}
$$
defined on the spin-3/2 cF lattice satisfies the promise gap $\lmin(\op H)\le-\Omega(1/\poly |\Lambda|)$ if $l$ is a \yes-instance, and $\lmin(\op H)\ge0$ otherwise.
This finishes the construction, and the claim of \cref{th:main} follows.

\section*{Conclusion}
The quest for ever-more physically realistic families of \QMA hard local Hamiltonians has arguably led us to increasingly contrived constructions.
The increase in complexity necessary when going from non-translationally-invariant constructions to translational invariance is striking \cite{Gottesman2009}, and the same holds true for the effort to bring the local dimension back within reasonable range \cite{Bausch2016}.
On the other hand, almost always some fundamental new piece of machinery had to be developed, advancing our knowledge about circuit Hamiltonians: such as allowing branching to happen in the computational path, or using easier-to-implement computational models (Quantum Ring Machines), of independent interest e.g.\ in the context of adiabatic quantum computation (\cite{Wei2015}).

In our case, we combine our construction with Wang tiles, which to our knowledge have not ever been used for this purpose.
This ``outsourcing'' of part of the computation to a classical constraint satisfaction problem saves a significant amount of overhead for the control machinery surrounding the actual quantum verification procedure.
Furthermore, the single universal quantum gate could be of independent interest in other applications, as it is reasonable to imagine a physical set-up where gates can only be applied to adjacent qubits in a circuit.

In fact, our 3D construction showcases that the embedded computation need not be highly obscure, and can, in contrast, even be quite elegant, as is evident by the much lower required local dimension and the therefore much smaller number of possible interactions necessary.
By moving beyond simple spatial lattices, we can show that such structures \emph{support} the emergence of more complex behaviour, despite the intrinsic symmetry of the crystal lattices we employ. By making use of these novel features, we are able to reduce the local dimension by two orders of magnitude as compared to the best result known to date.

We suggest three concrete open problems.
\begin{enumerate}
\item
While our cube crystal structure is three-dimensional, we do not exploit its bulk structure beyond making use of its different sides. But there are small universal machines in higher dimensions (e.g.~2D or 3D Turing machines, Turmites, or cellular automata) which might be of use for improving this result further. This also leaves open the question of the required local dimension necessary for any 2D construction.
\item
The history state construction we employ still relies on a single moving ``head'' state. More recent results (cf.~\cite{Breuckmann2013}) utilize a propagating wave-front-like clock construction in 2D. A more general open question is of course whether there is any other construction, different from Feynman's circuit-to-Hamiltonian one, which would allow one to prove a QMA-hardness result for the local Hamiltonian problem (as studied e.g.\ in \cite{Bausch2016a}).
Including classical computation parts with Wang tiles is one step, but are there other, fundamentally different sets of local interactions even suitable to encode parts of a quantum computation?
\item
A bottom-up approach proving a \emph{lower} bound on the local dimension (or locality) of the interactions would be an alternative route to new insights into the \textsc{local Hamiltonian} problem.
We want to emphasize that there is not much space left for any optimization: as mentioned in the introduction, our construction allows each coupling to have $\approx10^4$ free parameters; by the same benchmark, physically realistic spin lattices found in nature allow somewhere around $(3\times 3)^2\approx 80$ different couplings.

Recent results show that e.g.\ 1D gapped Hamiltonian ground states can be approximated efficiently (i.e.\ in randomized poly-time, cf.~\cite{Landau2013}), but since history state constructions have a spectral gap that closes inverse-polynomially with the runtime of the encoded computation, a lower bound on the required local dimension remains open.
\end{enumerate}

\paragraph{Acknowledgements.}
We are very grateful for discussions with M\=aris Ozols, who contributed \cref{lem:single-gate-universality}.
J.\,B. acknowledges support from the German National Academic Foundation and the EPSRC
(grant no. 1600123).
S.\,P. was supported by the EPSRC.

%\paragraph{Conflict of Interest.}
%The authors declare no conflict of interest.

\renewcommand*{\bibfont}{\small}
\printbibliography

\clearpage
\appendix

\onecolumn
\section{Quantum Complexity Classes}\label{sec:qmaexp}
In order to rigorously define the complexity classes \BQEXP and \QMAEXP, we need to understand the notion of a \emph{uniform circuit family}. Following and referring the reader to \cite{Bernstein1997,Watrous2012} for terminology,
%\TODO[Stephen]{Do we want to define TMs?}
we give the following definition.
\begin{definition}[Uniform family of quantum circuits]
	Let $f:\field N\rightarrow\field N$ be a function. A family of quantum circuits $(C_n)_{n\in\field N}$ is called \emph{$f$-uniform} if
	\begin{enumerate}[i.]
		\item each $C_n$ acts on $n$ qubits and has a distinct output qubit,
		\item each $C_n$ requires at most $f(n)$ additional ancillas $\ket 0$,
		\item each $C_n$ is composed of at most $f(n)$ gates from some universal gate set and
		\item there exists a classical Turing Machine which, on input $1^n$ produces a description of $C_n$ in fewer than $f(n)$ steps.
	\end{enumerate}
\end{definition}

A \emph{promise problem} is a pair of disjoint sets $(\Lyes,\Lno)$, corresponding to input strings for \yes and \no instances of a set of problem instances $\L=\Lyes\cup\Lno\subseteq\{0,1\}^*$. We are interested in the quantum generalization of \textsf{EXPTIME} and \textsf{NEXP}---\textsf{P} and \textsf{NP} with exponential runtime.
%\TODO[Stephen]{Do we even want to elaborate on this?}
\begin{definition}[\BQEXP]
	A promise problem $(\Lyes,\Lno)$ is in the complexity class \BQEXP, bounded-error quantum $\exp$-time, if there exists an $\exp$-uniform family of quantum circuits $(C_n)_{n\in\field N}$ such that
	\[
		\mathrm{Pr}(C_n(s)=\yes)\ge\frac23\quad\text{for }s\in\Lyes
		\quad\text{and}\quad
		\mathrm{Pr}(C_n(s)=\yes)\le\frac13\quad\text{for }s\in\Lno,
	\]
	where $C_n(s)$ denotes the distribution obtained from executing $C_n$ on input $s\in\L$ of size $n=|s|$ and measuring the output qubit.
\end{definition}

\begin{remark}\label{rem:probability-amp}
It is a well-known fact---see \cite[prop.~3]{Watrous2012}---that one can amplify the accept and reject probability of $2/3$ to any $1-2^{-p(n)}$ for any fixed polynomial $p$.
\end{remark}

\QMAEXP is then the class of promise problems, for which any \yes or \no answer can be verified with a \BQEXP verifier.
\begin{definition}[\QMAEXP]\label{def:qmaexp}
	A promise problem $(\Lyes,\Lno)$ is in \QMAEXP, quantum Merlin-Arthur, if there exists an $\exp$-uniform family of quantum circuits, called verifier, each of which with an $\exp$-sized witness as input, such that for $l\in\Lyes$ there exists a witness such that the verifier accepts with probability at least $2/3$, and for $l\in\Lno$, the verifier accepts with probability at most $1/3$, for any witness.
\end{definition}
The last two conditions on accepting \yes and rejecting \no instances is also known as \emph{completeness} and \emph{soundness}, respectively.
Probability amplification can be directly translated from their \BQEXP counterparts, cf.\ \cref{rem:probability-amp}.

\section{The Local Hamiltonian Problem}
We regard Hermitian operators acting on a multipartite Hilbert space $\H=(\C^d)^{\otimes n}$, i.e.\ on $n$ qudits, each of \emph{local dimension} $d$. We label subsystems of \H by an ordered tuple $\set A\subseteq\{1,\ldots,n\}$. For a $k$-qudit Hamiltonian $\op h$ for some $k\le n$ and some subset $\set A$, we denote with $\op h_\set A$ the operator that acts as $\op h$ on all qudits labelled by $\set A$, and as identity---\1---everywhere else.
\begin{definition}
	A Hermitian operator $\op H$ on Hilbert space $\H=(\C^2)^{\otimes n}$ is called \emph{$k$-local} if $\op H=\sum_i\op h_i$, where we require that there exists a family of subsystems $(\set A_i)_i$ of \H such that $|\set A_i|\le k\ \forall k$, and $\op h_i=(\op h'_i)_\set{A_i}\ \forall i$.
\end{definition}

If the Hilbert space \H is translationally-invariant---e.g. a lattice $\H^{\otimes\Lambda}$---then we say that a Hamiltonian on this space exhibits translational invariance if it follows the same symmetry, i.e. that the interactions between equivalent lattice sides are identical. This allows us to define the following variant of the local Hamiltonian problem, where we follow the naming convention in \cite{Gottesman2009}, i.e. we abbreviate translationally invariant local Hamiltonian as TILH.

\begin{definition}[\kdHam]\label{def:khamiltonian}\leavevmode
	Let $\Lambda(L,M,N)$ be a 3D lattice of side lengths $L,M$ and $N$, all $\le n$, with not necessarily trivial unit cell (e.g.\ cF, cI). Let $\op H = \sum_{i,j,k} \op h_{i,j,k}$ be a 3D translationally-invariant and geometrically $k$-local Hamiltonian on the lattice qudits $(\field C^d)^\Lambda$.
	\begin{description}
		\item[Input.] Specification of the lattice size $L,M,N$, and the matrix entries of $\op h$, up to $\BigO(\poly n)$ bits of precision.
		\item[Promise.] The operator norm of each local term is bounded, $\|\op h\|\le\poly n$ and either $\lmin(\op H)\le\alpha$ or $\lmin(\op H)\ge\beta$, where $\lmin(\op H)$ denotes the smallest eigenvalue of $\op H$ and $\beta - \alpha \leq 1/p(n)$ for some polynomial $p(n)$.
		\item[Output.] \yes if $\lmin(\op H)\le\alpha$, otherwise \no.
	\end{description}
\end{definition}

\section{Ring Machines}
Instead of working with quantum circuits or Turing Machines directly, we will work with a computational model known as \emph{Quantum Ring Machine}.
\begin{definition}[Quantum Ring Machine]
	A \emph{Quantum Ring Machine}---\emph{QRM} for short---is a tuple $(\op U,n\ket{q_i},\H_f)$ of a unitary operator acting on a pair of qudits $(\field C^d)^{\otimes 2}$, and some $n\in\field N$.
    $n\in\field N$ specifies the number of qudits on the ring $\H:=(\field C^d)^{\otimes n}$.
    Starting out from the initial configuration $\ket{q_i}\in\H^{\otimes n}$, we cyclicly apply the unitary $\op U$ to two adjacent ring sites until the reduced density matrix on one qudit is completely supported in a halting subspace $\H_f\subsetneq\field C^d$; before that, the overlap of any qudit with $\H_f$ is zero.
\end{definition}
QRMs have the distinct advantage of being simple to specify locally---like circuits, but unlike Turing Machines---whilst maintaining a straightforward evolution---in contrast to circuits, which can have a very complex global structure. The similarity between QRMs and TMs is deliberate, as it allows to extend the notions of halting or runtime in a very straightforward manner. A family of QRMs $(\op U, n)_{n\in\set I}$---labelled by some index set $\set I\subset\field N$---is called $\exp$-time terminating if it halts on any possible input specified on the tape, and if the number of rounds the unitary $\op U$ makes on the tape is upper-bounded by some function $f(n)$ for all $n\in\set I$, where $f(n)=\BigO(\exp n)$. Similar to halting, we specify \emph{accepting} and \emph{rejecting} configurations as special subspaces of \H.

If we want to perform a verifier computation with the QRM, we simply leave part of the tape unconstrained as a witness, however much is required by the verification.
Central to this work is the following lemma, see \cite[cor.~34]{Bausch2016}.
\begin{lemma}\label{lem:qmaexp-qrm}
	Let $(\Lyes,\Lno)$ be a \BQEXP promise problem. Then there exists an $\exp$-time terminating family of QRMs $(M_s)_s$ such that for $l\in\Lyes$, there exists a witness $w$, such that the QRM $M_{|l|}(l,w)$ accepts with probability at least $2/3$.
    Analogously, \no instances are accepted with probability at most $1/3$.
\end{lemma}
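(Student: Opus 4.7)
The plan is to unfold the definition of \BQEXP and then package the resulting uniform verifier circuit into the QRM model by combining a classical (reversible) Turing machine simulation with a layer that applies the two-qubit quantum gates locally. By the definition recalled in \cref{def:qmaexp} and the uniformity assumption, for any \BQEXP promise problem $(\Lyes,\Lno)$ there is a classical Turing machine $T$ that, on input $1^n$, outputs in $\exp(n)$ steps a description of a verifier circuit $C_n$ drawn from some fixed universal gate set, together with a polynomial-size specification of where each gate acts. The first step would be to replace $T$ by a reversible variant $T'$ via Bennett's standard construction at polynomial overhead, so that its evolution can be faithfully captured by a unitary acting on a finite local alphabet.

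Next, I would lay out the ring so that it is partitioned into ``classical'' and ``quantum'' lanes, as in \cref{fig:qrm-circuit}: one set of tracks holds the head position, internal state, and instruction flags of $T'$, another holds the emitted gate description $C_n$ as it is being written, while the remaining tracks carry the input $l$, the witness $w$, the ancilla $\ket 0$ registers, and a dedicated output qudit, with $\H_f\subsetneq\field C^d$ picked out as the subspace flagging ``accept'' or ``reject''. The fixed head unitary $\op R$ acting on pairs of adjacent ring sites is then defined by two kinds of rules. Rules of the first kind implement one transition of $T'$ on the classical lanes whenever its flag is adjacent to the head; rules of the second kind, triggered by a distinguished ``apply gate'' flag written out by $T'$, apply one of finitely many universal two-qudit gates to the adjacent pair of quantum lanes. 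Because both rule sets are fixed and finite, they fit into a single unitary on $(\field C^d)^{\otimes 2}$ for an appropriate constant~$d$.

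The schedule is then: on its first $\exp(n)$ rounds $T'$ writes out a description of $C_n$ onto the classical lanes and, by interleaving with the gate-application rules, drags the associated flags over to the correct pairs of quantum lanes so that the circuit is actually executed gate-by-gate on the input $\ket l\ket w\ket{0\cdots 0}$. Once $T'$ reaches its terminal state it conditionally rotates the designated output qudit into $\H_f$; up to that moment all qudits are kept out of $\H_f$ by construction, so the QRM halting condition is met exactly when verification is complete. Completeness and soundness of the QRM family inherit the $2/3$--$1/3$ gap directly from the verifier circuits $C_n$, and since $T'$ runs in $\exp(n)$ steps and each step only costs a constant number of cyclic passes of $\op R$, the family $(M_{|l|})$ is $\exp$-time terminating as required.

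The one part that requires genuine care, rather than routine bookkeeping, is ensuring that a single $\op R$, independent of $n$, suffices for all input sizes: everything that depends on $n$ must be written onto the tape by $T'$ itself rather than hard-coded into $\op R$, which in particular forces the classical lanes to carry enough alphabet symbols to hold pointers into a tape of $\exp(n)$ length (so positions must be encoded in a positional/unary hybrid that a constant-size head can increment locally). Once this local encoding of addresses and gate indices is fixed, the rest of the construction is a straightforward verification that the cyclic action of $\op R$ faithfully interleaves the reversible classical simulation with the quantum gate layer, yielding the claim of the lemma.
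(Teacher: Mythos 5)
Your proposal follows essentially the same reversible-TM-plus-gate-lanes intuition that the paper itself sketches in the caption of \cref{fig:qrm-circuit}; note, however, that the paper does not prove this lemma in this manuscript at all---it imports it verbatim as \cite[cor.~34]{Bausch2016}, so your sketch is a plausible reconstruction of that cited argument rather than a different route from a proof given here. One genuine overcomplication in your write-up: the worry that the classical lanes must carry ``pointers into a tape of $\exp(n)$ length'' encoded in some positional/unary hybrid is unnecessary and, taken literally, cannot work with a constant local alphabet. Standard simulations of a Turing machine by a one-dimensional cellular automaton (or, here, by a fixed two-site unitary swept cyclically around the ring) never need absolute addresses: the TM head position is tracked by a single local flag carried on the classical lanes, the head state is a constant-size register travelling with that flag, and the update rule fires precisely when $\op R$ straddles the flagged site. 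Gate applications work the same way---$T'$ writes a constant-size ``apply gate $g$ here'' marker next to the relevant quantum-lane pair, and $\op R$ fires the corresponding two-qubit unitary when it passes that marker. Once you drop the address-arithmetic bookkeeping, the rest of your schedule (Bennett reversibilization, interleaving the classical simulation with gate firings, deferring all instance-dependence to tape contents so $\op R$ stays fixed, and rotating the output into $\H_f$ only at termination) is the right shape of argument and matches the construction the paper relies on.
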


\section{Kitaev's History State Constructions with Branching}
By embedding a \BQEXP-complete QRM into a local Hamiltonian, which allows the execution of a \QMAEXP verifier circuit, we want to construct a family of \QMAEXP-hard \kdHam instances. Using $i$ as a multiindex to sum over all 3D lattice sides of $\Lambda$, the Hermitian operators that we regard can be written in the form
\begin{equation}\label{eq:history-state-ham}
\op H=\op P+\op T=\sum_i\op p_i + \sum_i\op t_i,
\end{equation}
where the local terms in $\op P$ are diagonal projectors in the computational basis---i.e.~classical---and the terms in $\op T$ are so-called transition rules. Any transition rule always takes the form
\begin{align}\label{eq:transition-term}
	\op t
	&=\sum_{\ket e}(\ket a\otimes\ket e-\ket b\otimes\op U\ket e)(\bra a\otimes\bra e-\bra b\otimes\bra e\op U^\dagger) \\
	&\equiv (\ketbra a+\ketbra b)\otimes\1 - \ketbra{a}{b}\otimes\op U-\ketbra{b}{a}\otimes\op U^\dagger,
\end{align}
where $\ket a,\ket b$ are basis vectors in some subspace $\H_c$---which we call classical---and the $\ket e$ label a basis of some different---quantum---subspace $\H_q$. $\op U\in\set{SU}(\H_q)$ is a unitary operator on this quantum subspace. An operator $\op T$ made up of such transition rules can be thought of a generalized Laplacian for a simple graph with unitary edge labels, formally defined as follows.
\begin{definition}\label{def:ulg}
	A \emph{unitary labeled graph} (ULG) is an undirected graph $G=(\set V,\set E)$, a Hilbert space $\H_v$ for each graph vertex $v\in\set V$ and a function $g:\set E\longrightarrow\bigcup_v\mathcal B(\H_v)$, assigning a unitary operator to each edge $e\in\set E$, where $g(e)\in\mathcal B(\H_a)$ if $e=(a,b)$.
\end{definition}
In particular, the Hilbert spaces attached to two vertices are necessarily isomorphic if the vertices are connected. The associated Hamiltonian is then simply defined as a sum of transition rules of the form \cref{eq:transition-term} for each edge of the graph. We refer the reader to \cite[ch.~5]{Bausch2016} for details and a simple example.

If the product of unitaries along any loop within the graph is the identity operator---where we flip $\op U$ to $\op U^\dagger$ in case we march against an edge direction---we call the ULG \emph{simple}. A simple ULG has the following important property.
\begin{lemma}
	The associated Hamiltonian of a simple and connected ULG is unitarily equivalent to copies of the underlying graph's Laplacian $\Delta$, i.e.~there exists a unitary $\op W$ such that $\op W\op H\op W^\dagger=\Delta\otimes\1_n$, where $n=\dim\H_q$.
\end{lemma}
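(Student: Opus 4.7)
The plan is to build the unitary $\op W$ as a block-diagonal operator, with the block at vertex $v$ given by a unitary $\op U_v \in \mathcal B(\H_q)$ obtained by ``parallel-transporting'' along paths in the graph. Concretely, fix a root vertex $v_0\in\set V$ and set $\op U_{v_0}=\1$. For any $v\in\set V$, connectedness guarantees a path $v_0=w_0,w_1,\ldots,w_k=v$ in $G$; orient each edge once and for all, so that $g(w_{i-1},w_i)$ is either the stored label $\op U_i$ or its inverse $\op U_i^\dagger$, depending on whether we traverse the edge in its forward or backward direction. Define $\op U_v$ as the corresponding product of unitaries along this path, read in the appropriate order. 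By the simplicity hypothesis, the product of unitaries around any loop is the identity, so any two paths from $v_0$ to $v$ yield the same $\op U_v$; this is the one place where simplicity is essential. With $\op U_v$ defined on all of $\set V$, set
\[
    \op W \;:=\; \sum_{v\in\set V}\ketbra{v}\otimes\op U_v.
\]

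Next I would verify that $\op W$ is unitary: since the $\{\ket v\}$ are orthonormal and each $\op U_v$ is unitary on $\H_q$, one has $\op W\op W^\dagger=\op W^\dagger\op W=\sum_v\ketbra v\otimes\1=\1$. After that, I would process the Hamiltonian edge by edge. For an edge $e=(a,b)$ carrying label $\op U_e$ (with the chosen orientation from $a$ to $b$), the construction of $\op U_v$ along a path passing through this edge gives the compatibility relation $\op U_b=\op U_a\op U_e$, i.e.\ $\op U_e=\op U_a^\dagger\op U_b$. A direct computation then yields
\begin{align*}
    \op W\,\op t_e\,\op W^\dagger
    &= (\ketbra a+\ketbra b)\otimes\1 \\
    &\quad{}- \ketbra a b\otimes\op U_a\op U_e\op U_b^\dagger \\
    &\quad{}- \ketbra b a\otimes\op U_b\op U_e^\dagger\op U_a^\dagger \\
    &= \bigl(\ketbra a+\ketbra b-\ketbra a b-\ketbra b a\bigr)\otimes\1,
\end{align*}
which is exactly the edge-$e$ contribution to the graph Laplacian $\Delta$ tensored with $\1_{\H_q}$.

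Summing over all edges gives $\op W\op H\op W^\dagger=\Delta\otimes\1_n$, as claimed. The hard part is not the computation itself but setting up consistent conventions: one must fix an orientation for every edge, specify how path-products interact with edge reversal (sending $\op U_e\mapsto\op U_e^\dagger$), and invoke the simplicity hypothesis precisely where the definition of $\op U_v$ could otherwise be path-dependent. Once those conventions are pinned down, the edge-by-edge verification and the unitarity of $\op W$ are short, and the conclusion follows immediately by linearity.
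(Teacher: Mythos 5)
The paper does not give a proof of this lemma; it simply cites \cite[lem.~41]{Bausch2016}. Your proposal therefore cannot be compared line-by-line to the paper, but it is the standard ``gauge-fixing'' or parallel-transport argument that one would expect underlies the cited result, and it is correct. You correctly identify the two ingredients: connectedness gives existence of paths from a fixed root $v_0$ to every vertex, and simplicity gives path-independence of the accumulated unitary $\op U_v$, which is exactly what makes $\op W=\sum_v\ketbra v\otimes\op U_v$ well-defined. The unitarity check and the edge-by-edge cancellation are both right, and the compatibility relation $\op U_b=\op U_a\op U_e$ (equivalently $\op U_e=\op U_a^\dagger\op U_b$) is precisely what makes $\op U_a\op U_e\op U_b^\dagger=\1$ and $\op U_b\op U_e^\dagger\op U_a^\dagger=\1$, so each conjugated transition term reduces to the Laplacian contribution $(\ketbra a+\ketbra b-\ketbra{a}{b}-\ketbra{b}{a})\otimes\1$.

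One small caveat worth flagging, more about the paper than about you: the paper's displayed $\equiv$-form of $\op t$ in \cref{eq:transition-term} is not actually equal to the preceding $\sum_{\ket e}(\cdots)(\cdots)$ form --- expanding the sum gives $-\ketbra{a}{b}\otimes\op U^\dagger-\ketbra{b}{a}\otimes\op U$, with $\op U$ and $\op U^\dagger$ swapped relative to the displayed $\equiv$. Your computation uses the displayed $\equiv$-form, and your compatibility relation $\op U_b=\op U_a\op U_e$ is the one that matches it; had you used the $\sum_{\ket e}$ form, the matching relation would be $\op U_b=\op U_a\op U_e^\dagger$. Either choice works, since the whole point of the construction is to choose a consistent convention, but it is worth stating explicitly which version of $\op t$ you are working with, since the two forms in the paper disagree. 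Otherwise the argument is complete.
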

\begin{proof}
	Cf.~\cite[lem.~41]{Bausch2016}.
%	\TODO[Johannes]{This might shift around, keep lemma number up to date.}
\end{proof}
Note that this extends to non-connected ULGs in a straightforward manner, as the associated Hamiltonian is block-diagonal in the connected graph components. With this, we can formulate a variant of \emph{Kitaev's geometrical lemma}, which allows us to analyse the spectra of the Hamiltonians we construct.
\begin{lemma}\label{lem:kitaev-ulgs}
	Take a history state Hamiltonian of the form \cref{eq:history-state-ham}, where $\op T$ is the associated Hamiltonian of some simple connected ULG with Hilbert space $\H_q$ for all vertices $v\in\set V$. We require that $\op P=\sum_{p\in\set P}\Pi_p\otimes\Pi_{p,q}$, where $\Pi_p$ is a projector on some vertex $p\in\set P\subseteq\set V$, and the $\Pi_{p,q}$ are projectors on subspaces of $\H_q$. Then $\lmin(\op H)=\mu\Omega(1/|\set V|^3)$, where $\mu=\min\{\lmin(\Pi_{p_i,q}\op U_{ij}\Pi_{p_j,q}):p_i,p_j\in\set P\}$ and $\op U_{ij}$ is the product of unitaries of a path connecting vertices $p_i$ and $p_j$.
\end{lemma}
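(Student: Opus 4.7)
The plan is to reduce this to a standard application of Kitaev's geometric lemma combined with the structural result on simple connected ULGs from the preceding lemma.

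First, I would use the unitary equivalence to rewrite $\op T$ as $\Delta_G\otimes\1_{\H_q}$, where $\Delta_G$ is the combinatorial Laplacian of the underlying connected graph $G=(\set V,\set E)$. Standard spectral graph theory gives a gap $g_T=\Omega(1/|\set V|^2)$ between the zero ground eigenvalue of $\Delta_G$ and the rest of the spectrum. Undoing the unitary $\op W$, the ground space of $\op T$ is spanned by history states of the form
\[
    \ket{\psi_\phi}=\frac{1}{\sqrt{|\set V|}}\sum_{v\in\set V}\ket v\otimes\op U_v\ket\phi,
\]
parametrized by $\ket\phi\in\H_q$, where $\op U_v$ denotes the product of edge-unitaries along any path from a fixed reference vertex $v_0$ to $v$; this is well-defined because the ULG is simple.

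Second, I would project $\op P=\sum_{p\in\set P}\Pi_p\otimes\Pi_{p,q}$ onto this ground space. Since the $\Pi_p$ are mutually orthogonal single-vertex projectors, a direct computation yields the effective operator on $\H_q$
\[
    \op P_\text{eff}=\frac{1}{|\set V|}\sum_{p\in\set P}\op U_p^\dagger\,\Pi_{p,q}\,\op U_p,
\]
which on any unit $\ket\phi$ satisfies $\bra\phi\op P_\text{eff}\ket\phi=\frac{1}{|\set V|}\sum_p\|\Pi_{p,q}\op U_p\ket\phi\|^2$. After conjugation by $\op U_{p_i}^\dagger$ and $\op U_{p_j}$, the quantity $\lmin(\Pi_{p_i,q}\op U_{ij}\Pi_{p_j,q})$ measures the overlap between the rotated projectors $\Pi'_p:=\op U_p^\dagger\Pi_{p,q}\op U_p$; I would use this uniform pairwise bound $\mu$ to conclude $\lmin(\op P_\text{eff})\geq\mu/|\set V|$.

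Finally, Kitaev's geometric lemma applied to the splitting $\op H=\op T+\op P$ yields
\[
    \lmin(\op H)\geq\Omega\!\left(\frac{g_T\cdot\lmin(\op P_\text{eff})}{g_T+\lmin(\op P_\text{eff})+\|\op P\|}\right).
\]
Since the $\Pi_p$ are pairwise orthogonal as vertex projectors, $\|\op P\|\leq 1$; plugging in $g_T=\Omega(1/|\set V|^2)$ and $\lmin(\op P_\text{eff})\geq\mu/|\set V|$ gives $\lmin(\op H)=\mu\cdot\Omega(1/|\set V|^3)$ as claimed. The main obstacle is the third step: converting the pairwise overlap bound $\mu$ into a uniform spectral lower bound on $\op P_\text{eff}$. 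This requires first fixing an interpretation of $\lmin$ for the non-Hermitian operator $\Pi_{p_i,q}\op U_{ij}\Pi_{p_j,q}$ (most naturally, the smallest eigenvalue of the PSD Hermitianization $\Pi_{p_j,q}\op U_{ij}^\dagger\Pi_{p_i,q}\op U_{ij}\Pi_{p_j,q}$), then showing that $\mu>0$ precludes any unit $\ket\phi\in\H_q$ from being simultaneously annihilated by every $\Pi_{p,q}\op U_p$, and finally extracting a quantitative bound via compactness or a chain argument over pairs in $\set P$.
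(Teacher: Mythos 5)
The paper does not actually prove this lemma; the ``proof'' in the text is a single citation to \cite[lem.~44]{Bausch2016}. Your sketch is therefore a reconstruction of a proof that is entirely outsourced. At the level of structure your plan is the natural one and almost certainly mirrors the cited reference: diagonalize $\op T$ via the simple-ULG unitary equivalence to a graph Laplacian, read off the $\Omega(1/|\set V|^2)$ spectral gap of a connected graph, compute the effective penalty $\op P_\text{eff}$ on the history-state ground space, and then invoke Kitaev's geometrical lemma. Your computation of $\op P_\text{eff}=\frac{1}{|\set V|}\sum_p\op U_p^\dagger\Pi_{p,q}\op U_p$ is correct, and the bound $\|\op P\|\le 1$ from orthogonality of the $\Pi_p$ is fine.

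The gap you flag in your third step is real and not easily closed by what you have written. The pairwise quantity $\mu$, however one Hermitianizes $\Pi_{p_i,q}\op U_{ij}\Pi_{p_j,q}$, controls only the relative positions of two rotated projectors $\Pi'_{p_i},\Pi'_{p_j}$ at a time; it does not, by itself, preclude a vector $\ket\phi$ that lies in the common kernel of all of the $\Pi'_p$ simultaneously once $|\set P|\ge 3$ or once the individual ranks leave room. (For instance, three well-separated rank-one projectors in a four-dimensional $\H_q$ still admit an orthogonal direction, giving $\lmin(\op P_\text{eff})=0$ despite $\mu>0$.) So ``$\lmin(\op P_\text{eff})\ge\mu/|\set V|$'' is an assertion, not a consequence, of what precedes it. To actually recover the stated bound you would need to either restrict to the case $|\set P|=2$ (input/output penalties, which is what the cited lemma is built for), or add a hypothesis that the rotated projectors jointly have trivial common kernel with the pairwise $\mu$ certifying the angle; a compactness or chain argument over pairs, as you gesture at, does not give a quantitative spectral bound without such extra structure. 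Relatedly, the symbol $\lmin$ applied to the non-Hermitian operator $\Pi_{p_i,q}\op U_{ij}\Pi_{p_j,q}$ is left undefined in the statement itself (and becomes degenerate if $p_i=p_j$), and part of an honest proof is pinning that definition down so that $\mu=0$ exactly characterizes the \yes-case. As written, your step three is a placeholder for the actual content of \cite[lem.~44]{Bausch2016} rather than a proof of it.
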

\begin{proof}
	Cf.~\cite[lem.~44]{Bausch2016}.
\end{proof}

\section{Single Gate Universality}
In order to execute the QRM, we have to be able to cyclicly apply the QRM head unitary on a pair of qudits. Since we will be working with qu\emph{b}its in our construction, we embed each such qu\emph{d}it into a list of qu\emph{b}its, and approximate the 2-local qu\emph{d}it unitary using a special 2-local unitary gate $\op G$, which can act on any two neighbouring qubits.
In order to apply Solovay-Kitaev to the QRM head unitary and approximate it with a $\BigO(\log(1/\epsilon))$ gate count (as opposed to $\sim1/\epsilon$), we have to be able to apply both $\op G$ and its inverse, $\op G^\dagger$; however, in Solovay-Kitaev, the requirement is that those two gates can be applied to \emph{any pair} of qubits, whereas in our construction---as will become clear later---we can only ever apply either gate to neighbouring qubits.

It suffices to prove that $\op G$ is universal when applied to adjacent qubits, which is what the following lemma shows.

\begin{lemma}[Ozols \cite{OzolsPC}]\label{lem:single-gate-universality}
    Define the 2-qubit unitary $\op G:=\exp(\ii\op H)$ with
    \[
        \op H:=\sigma_x\otimes\1+1\otimes\sigma_z+\sigma_x\otimes\sigma_x+\sigma_z\otimes\sigma_z=
        \begin{pmatrix}
        	2 & 0  & 1 & 1 \\
        	0 & -1 & 1 & 1 \\
        	1 & 1  & 0 & 0 \\
        	1 & 1  & 0 & 0
        \end{pmatrix}.
    \]
    Then the unitaries $\{\op G_{k,k+1\mod l}:k=0,\ldots,l-1\}$ generate a dense subset of $\set{SU}(2^l)$ for all $l\ge 3$, where the subscript denotes where the unitaries act.
\end{lemma}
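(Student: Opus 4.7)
The plan is to combine a Lie-theoretic closure argument with a direct commutator computation. First I would upgrade the single discrete element $\op G$ to its entire one-parameter subgroup $\{e^{\ii t \op H}: t \in \field R\}$ via a density-in-torus argument, and then reduce the density claim to the purely Lie-algebraic statement that the nearest-neighbour Hamiltonian terms $\{\op H_{k,k+1 \mod l}\}$ generate $\mathfrak{su}(2^l)$ under commutators.

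For the closure step I diagonalise $\op H$: its last two rows coincide, so $0$ is an eigenvalue, and the remaining three eigenvalues are roots of the cubic $p(\lambda)=\lambda^3-\lambda^2-6\lambda+2$, which the rational roots test (none of $\pm1, \pm2$ vanish) shows to be irreducible over $\field Q$. One computes discriminant $\Delta=1016$, which is positive but not a perfect square, so the Galois group is $S_3$ and acts transitively on the three roots. A short transposition argument then forces every $\field Q$-linear relation among the roots to be trivial, so the three (irrational) roots are $\field Q$-linearly independent; together with the transcendental $2\pi$ they remain so. Kronecker's equidistribution theorem then gives that $\{\op G^n: n \in \field Z\}$ is dense in the full three-torus of diagonal unitaries in the eigenbasis of $\op H$, and in particular contains the one-parameter subgroup $\{e^{\ii t \op H}: t \in \field R\}$ in its closure. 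Applying this at every adjacent pair, the closure $\overline{\mathcal G}$ of the group generated by $\{\op G_{k,k+1 \mod l}\}$ in $\set{U}(2^l)$ contains $e^{\ii t \op H_{k,k+1 \mod l}}$ for all $t$ and all $k$. As a closed subgroup of a compact Lie group, $\overline{\mathcal G}$ is a Lie subgroup whose Lie algebra $\mathfrak g$ contains the Lie subalgebra $\mathfrak L$ generated by $\{\ii \op H_{k,k+1 \mod l}\}$; it therefore suffices to show $\mathfrak L = \mathfrak{su}(2^l)$ for $l \geq 3$, since this yields $\set{SU}(2^l) = \exp(\mathfrak{su}(2^l)) \subseteq \overline{\mathcal G}$.

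The hard step, and the principal obstacle, is the Lie-algebra universality. Writing $\op H_{k,k+1}=X_k+Z_{k+1}+X_k X_{k+1}+Z_k Z_{k+1}$, a first commutator $[\op H_{k,k+1}, \op H_{k+1,k+2}]$ produces Hermitian operators carrying a $Y_{k+1}$ factor on the shared site (through $[Z_{k+1}, X_{k+1}]=-2\ii Y_{k+1}$) together with three-body Pauli strings. Nested brackets with further $\op H_{j,j+1}$ and suitable linear combinations can then be used to localise these outputs and extract the full set of single-qubit Paulis $\{\ii X_k, \ii Y_k, \ii Z_k\}$ at every site; the ring structure is precisely what lets local operators propagate all the way around for any $l \geq 3$. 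Subtracting the single-qubit content of $\ii \op H_{k,k+1}$ then isolates $\ii(X_k X_{k+1}+Z_k Z_{k+1})$, from which $\ii X_k X_{k+1}$ follows by one more bracket with $\ii Z_k$. Local Paulis plus an entangling nearest-neighbour $XX$-type coupling along the edges of a connected interaction graph is a standard universal generating set for $\mathfrak{su}(2^l)$. A cleaner, less explicit alternative is to invoke Schur's lemma: if the commutant of $\{\op H_{k,k+1 \mod l}\}$ in $M_{2^l}(\field C)$ is trivial, the double-commutant theorem gives that the associative algebra they generate is all of $M_{2^l}(\field C)$, hence $\mathfrak L \supseteq \mathfrak{su}(2^l)$. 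The chief delicacy in either route is the combinatorial Pauli-string bookkeeping needed to rule out accidental cancellations; I would handle the smallest ring $l=3$ as an explicit base case (a tractable computation in $\set{SU}(8)$) and bootstrap larger $l$ by a local propagation argument that uses each additional rung $\op H_{l-1,l \mod l}$ to extend the single-qubit universality already established on the previous ring.
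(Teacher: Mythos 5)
Your high-level plan coincides with the paper's: reduce the group-level density claim to showing the Lie algebra $\mathfrak{L}$ generated by $\{\ii\op H_{k,k+1}\}$ is all of $\mathfrak{su}(2^l)$, reduce further to $l=3$ by universality of nearest-neighbour 3-qubit gates, and settle that case by an explicit computation. Where you add value is in spelling out the ``upgrade $\op G$ to its full one-parameter subgroup'' step, which the paper dispatches only by citing \cite{Childs2010}: the characteristic polynomial of $\op H$ does factor as $\lambda(\lambda^3-\lambda^2-6\lambda+2)$, the cubic is irreducible with discriminant $1016$ (not a square, so Galois group $S_3$), and the transposition argument does force the three cubic roots to be $\field Q$-linearly independent; combined with transcendence of $\pi$ this gives the Kronecker density you want. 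That part is fine.

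The step that actually carries the weight — showing $\mathfrak{L}=\mathfrak{su}(8)$ for $l=3$ — is exactly what you leave as a sketch (``extract the single-qubit Paulis and the $XX$ coupling by nested brackets'') plus a promised ``tractable base case computation,'' which is precisely the nontrivial content. The paper closes this gap concretely: it exhibits a specific computer-verified list of 63 nested commutators of $\op H_1 = \op H\otimes\1$ and $\op H_2=\1\otimes\op H$ (\cref{tab:lie-entries}) that are checked to be linearly independent and traceless, hence span $\mathfrak{su}(8)$. Without such a verification, or an equivalent hand computation, the proposal is not yet a proof. More seriously, the ``cleaner alternative'' via Schur's lemma and the double-commutant theorem is incorrect as stated: that the associative $*$-algebra generated by the $\op H_{k,k+1}$ is all of $M_{2^l}(\field C)$ does \emph{not} imply $\mathfrak{L}\supseteq\mathfrak{su}(2^l)$. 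The defining representations of $\mathfrak{so}(n)$ and $\mathfrak{sp}(n/2)$ act irreducibly (trivial commutant) and generate $M_n(\field C)$ associatively, yet are proper Lie subalgebras of $\mathfrak{su}(n)$. This is a classical pitfall in quantum controllability; to use this route one must additionally rule out the orthogonal and symplectic cases (e.g.\ by checking that $\mathfrak{L}$ is not conjugate into matrices that are all symmetric or all skew relative to some fixed bilinear form), which your proposal does not do. As written, the commutant argument does not replace the explicit commutator verification.
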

    \begin{table}
        \centering
        \begin{tabular}{r|rrrrrrrrrrr}
        	   &  1 &  2 &  3 &  4 &  5 &  6 &  7 &  8 &  9 & 10 & 11 \\ \midrule
        	 2 &  3 &    &    &    &    &    &    &    &    &    &  \\
        	 3 &  4 &  5 &    &    &    &    &    &    &    &    &  \\
        	 4 &  6 &  7 &  8 &    &    &    &    &    &    &    &  \\
        	 5 &    &  9 & 10 & 11 &    &    &    &    &    &    &  \\
        	 6 & 12 & 13 & 14 & 15 & 16 &    &    &    &    &    &  \\
        	 7 &    & 17 & 18 & 19 & 20 & 21 &    &    &    &    &  \\
        	 8 &    &    & 22 & 23 & 24 & 25 & 26 &    &    &    &  \\
        	 9 &    & 27 & 28 & 29 & 30 & 31 & 32 & 33 &    &    &  \\
        	10 &    &    & 34 & 35 & 36 & 37 & 38 & 39 & 40 &    &  \\
        	11 &    &    &    & 41 & 42 & 43 & 44 & 45 & 46 & 47 &  \\
        	12 &    & 48 & 49 &    & 50 &    & 51 & 52 & 53 & 54 & 55 \\
        	13 &    & 56 & 57 & 58 & 59 & 60 & 61 & 62 &    & 63 &
        \end{tabular}
        \caption{A Linearly independent set of generators for $\mathfrak{su}(8)$ in terms of nested commutators of $\op H_1$ and $\op H_2$. For example, $\op H_{42}:=\ii[\op H_{11},\op H_5].$}
        \label{tab:lie-entries}
    \end{table}
    
\begin{proof}
    Since 3-qubit unitaries generate a dense subset of $\set U(2^l)$ when applied to adjacent qubits, it suffices to prove the claim for $l=3$. The proof follows techniques in Lie algebra \cite{Childs2010}.
    Define $\op H_1:=\op H\otimes\1_2$ and $\op H_2:=\1\otimes\op H$, and let $\mathcal L(\op H_1,\op H_2)$ be the Lie algebra generated by these two elements. For $j=3,\ldots,63$, we set $\op H_j:=\ii[\op H_{r_j},\op H_{c_j}]$, where $r_j$ and $c_j$ are the row and column numbers of entry $j$ in \Cref{tab:lie-entries}. One can verify---using a computer algebra system---that the matrices $\{\op H_1,\ldots,\op H_{63}\}$ are linearly independent, and traceless by construction. Since $\dim\mathfrak{su}(8)=63$, they furthermore span the entire algebra, and the claim follows.
\end{proof}

\section{The Cube}\label{sec:cube}
\subsection{Overview}
We work with a face-centered cubic lattice of side lengths $D\times H\times W$, as shown in \cref{fig:cube-structure}. At each vertex we place a 4-dimensional spin with local Hilbert space $\mathcal H_\text{loc}=\field C^4$, and we want to define a 4-local Hamiltonian on the lattice which embeds the evolution of a \QMAEXP verifier.
Our construction comprises the following three main steps.

\paragraph{Binary Counting.}
We construct a 2D tileset which lives on the top face of the cuboid, and translates the cuboid depth $D$ into a binary description of $D$ on the top front edge, which is of size $\log_2 D$. This binary string encodes a circuit $C$ according to \cref{tab:encoding} and \cref{fig:circuit-to-desc}.

\paragraph{Shuffling the Program.}
Using another 2D tileset, we cyclicly shuffle this circuit program around the sides of the cuboid and wind it down diagonally as shown in \cref{fig:cube-structure}. The front edge---marked in red---is the computation edge and will periodically see the entire binary description of the program.

\paragraph{Performing Gates.}
On the sides of the cuboid, we superpose a layer of qubits.
Labelling the qubits around the top edge of the cube with $\ket{q_1}\ket{q_2}\cdots\ket{q_N}$, we define transition rules which allow us to perform one of the following four operations:
\begin{enumerate}
	\item cycle the qubits clockwise, to $\ket{q_N}\ket{q_1}\cdots\ket{q_{N-1}}$,
	\item cycle them anti-clockwise, to $\ket{q_2}\cdots\ket{q_N}\ket{q_1}$,
	\item perform a universal two-qubit quantum gate $\op G$ on the first two qubits,
	\item or perform the inverse of this gate, i.e.\ $\op G^\dagger$, on the first two qubits.
\end{enumerate}
 
Once any gate operation is performed, all qubits are swapped with the ones on the next-lower level while cycling them in the direction specified.
On the next layer, the same procedure repeats until the execution terminates after $H$ steps ($H$ being the height of the cube).
The history state construction for one operation above thus requires $2\times (D+W)$ steps.

 The necessary transition rules are described in detail in \cref{sec:comp}, and \cref{tab:encoding} describes how the binary program description on the computation edge is interpreted as one of the four actions above at each level. \Cref{fig:circuit-to-desc} shows how any circuit can be encoded in this way.
Observe that due to the winding program description---which is exposed periodically at the front edge---we necessarily apply the same circuit over and over again. In between each appearance of the description of $C_{\op R}$ on the computation edge, the string of zeroes does not implement any gates or move the tape in either direction.  Naturally, this is precisely the evolution of a Quantum Ring Machine.

\begin{figure}
  \centering
  \begin{tikzpicture}[
      draw=black,line width=.8pt,
      box/.style = {rounded corners=.5},
      head/.style = {red},
      code/.style = {black!20}
    ]
    \newcommand{\gate}[3]{
      \draw[fill=white,box] (#1,#2-1/2-.3) rectangle (#1+1,#2-1/2+1+.3);
      \node[align=center] at (#1+.5,#2) {#3};
    }
    \fill[black,opacity=.2,pattern=north east lines] (-1.75, 4.75) rectangle (14.75, 6.2);
    \fill[black,opacity=.2,pattern=north east lines] (-1.75, -.7) rectangle (14.75, .75);

    \foreach\i in {-1,...,12}
    {\draw[] (-.3,\i/2) -- (14.5,\i/2);}

    \draw[decorate,decoration={open brace,amplitude=3pt,mirror,open=right}] (-.5, 6) -- (-.5, 4.8) node[midway,left,xshift=-3] {$\ket{q_{i-1}}$};
    \draw[decorate,decoration={brace,amplitude=3pt,mirror}] (-.5, 4.7) -- (-.5, 2.8) node[midway,left,xshift=-3] {$\ket{q_{i}}$};
    \draw[decorate,decoration={brace,amplitude=3pt,mirror}] (-.5, 2.7) -- (-.5, .8) node[midway,left,xshift=-3] {$\ket{q_{i+1}}$};
    \draw[decorate,decoration={open brace,amplitude=3pt,mirror,open=left}] (-.5, .7) -- (-.5, -.5) node[midway,left,xshift=-3] {$\ket{q_{i+2}}$};    

    \DeclareDocumentCommand{\conn}{ r() r() }{
        \draw[head] (#1) -| ($(#1) !.5! (#2)$) |- (#2);
    }
    \DeclareDocumentCommand{\desc}{ m m m O{} }{
        \draw[code, xshift=.5cm] (#1,#2) -- (#1,-2.2);
        \draw[xshift=.5cm] (#1,-2.6) node[black, align=center] {\texttt{#3}};
    }

    \begin{scope}[xshift=-.25cm,yshift=.25cm, xscale=1/1.5, yscale=1/2]
        \conn (0,8) (1,8);
        \conn (2,8) (3,7);
        \conn (4,7) (5,7);
        \conn (6,7) (7,4);
        \conn (8,4) (9,5);
        \conn (10,5) (11,6);
        \conn (12,6) (13,8);
        \conn (14,8) (15,3);
        \conn (16,3) (17,2);
        \conn (18,2) (19,4);
        \conn (20,4) (21,4);
        \draw[head,dashed] (21,4) -- (22,4);
        
        \begin{scope}[on background layer,thin]
        \desc 0 8 {S}
        \desc 1 8 {DGD}
        \desc 2 7 {D}
        \desc 3 7 {UIU}
        \desc 4 7 {S}
        \desc 5 7 {UIU}
        \desc 6 4 {DDD}
        \desc 7 4 {DGD}
        \desc 8 4 {U}
        \desc 9 5 {DGD}
        \begin{scope}[xshift=10cm] %weird bug
            \desc 0 5 {U}
            \desc 1 6 {UIU}
            \desc 2 6 {UU}
            \desc 3 8 {UI}
            \desc 4 3 {DDDD}
            \desc 5 3 {DGD}
            \desc 6 2 {D}
            \desc 7 2 {UIU}
            \desc 8 2 {UU}
            \desc 9 4 {DGD}
            \desc{10}{4}{S}
        \end{scope}
        \end{scope}

        \gate{1}{8}{$\op G$}
        \gate{3}{7}{$\op G^\dagger$}
        \gate{5}{7}{$\op G^\dagger$}
        \gate{7}{4}{$\op G$}
        \gate{9}{5}{$\op G$}
        \gate{11}{6}{$\op G^\dagger$}
        \gate{13}{8}{$\op G^\dagger$}
        \gate{15}{3}{$\op G$}
        \gate{17}{2}{$\op G^\dagger$}
        \gate{19}{4}{$\op G$}
    \end{scope}
  \end{tikzpicture}
  \\[.5cm]
  \begin{tikzpicture}[
      draw=black,line width=.8pt,
      box/.style = {rounded corners=.5},
      head/.style = {red},
      code/.style = {black!20},
      xscale=1/1.7, yscale=1/2,
      baseline=0
  ]
  \newcommand{\gate}[3]{
    \draw[fill=white,box] (#1,#2-1/2-.3) rectangle (#1+1,#2-1/2+1+.3);
    \node[align=center] at (#1+.5,#2) {#3};
  }
  \DeclareDocumentCommand{\conn}{ r() r() }{
    \draw[head] (#1) -| ($(#1) !.5! (#2)$) |- (#2);
  }  
  \node[right] at (-.25, 4) {\textsc{stay} (\texttt{S})};
  
  \foreach\x/\c in {1/0, 2/0, 3/0} {
    \draw[on background layer,thin,black!20] (\x,2.7) -- (\x,-.7);
    \node at (\x,3) {\c};
  }
  \foreach\i in {0,...,3}
     {\draw[] (.2,\i-.5) -- (3.8,\i-.5);}
  
  \draw[red,dashed] (0, 2) -- (1, 2);
  \conn (1, 2) (2, 1);
  \conn (2, 1) (3, 2);
  \draw[red,dashed] (3, 2) -- (4, 2);
  
  \node at (1, 2) {\Sr};
  \node at (2, 1) {\Sl};
  \node at (3, 2) {\Sr};
  \end{tikzpicture}
  \begin{tikzpicture}[
      draw=black,line width=.8pt,
      box/.style = {rounded corners=.5},
      head/.style = {red},
      code/.style = {black!20},
      xscale=1/1.7, yscale=1/2,
            baseline=0
  ]
  \newcommand{\gate}[3]{
    \draw[fill=white,box] (#1,#2-1/2-.3) rectangle (#1+1,#2-1/2+1+.3);
    \node[align=center] at (#1+.5,#2) {#3};
  }
  \DeclareDocumentCommand{\conn}{ r() r() }{
    \draw[head] (#1) -| ($(#1) !.5! (#2)$) |- (#2);
  }  
  \node[right] at (-.25, 4) {\textsc{down} (\texttt{D})};
  
  \foreach\x/\c in {1/0, 2/1, 3/0, 4/0} {
    \draw[on background layer,thin,black!20] (\x,2.7) -- (\x,-.7);
    \node at (\x,3) {\c};
  }
  \foreach\i in {0,...,3}
     {\draw[] (.2,\i-.5) -- (4.8,\i-.5);}
  
  \draw[red,dashed] (0, 2) -- (1, 2);
  \conn (1, 2) (2, 1);
  \conn (2, 1) (3, 0);
  \conn (3, 0) (4, 1);
  \draw[red,dashed] (4, 1) -- (5, 1);
  
  \node at (1, 2) {\Sr};
  \node at (2, 1) {\Sr};
  \node at (3, 0) {\Sl};
  \node at (4, 1) {\Sr};
  \end{tikzpicture}  
  \begin{tikzpicture}[
      draw=black,line width=.8pt,
      box/.style = {rounded corners=.5},
      head/.style = {red},
      code/.style = {black!20},
      xscale=1/1.7, yscale=1/2,
            baseline=0
  ]
  \newcommand{\gate}[3]{
    \draw[fill=white,box] (#1,#2-1/2-.3) rectangle (#1+1,#2-1/2+1+.3);
    \node[align=center] at (#1+.5,#2) {#3};
  }
  \DeclareDocumentCommand{\conn}{ r() r() }{
    \draw[head] (#1) -| ($(#1) !.5! (#2)$) |- (#2);
  }  
  \node[right] at (-.25, 4) {\textsc{up} (\texttt{U})};
  
  \foreach\x/\c in {1/0, 2/0, 3/1, 4/0} {
    \draw[on background layer,thin,black!20] (\x,2.7) -- (\x,-.7);
    \node at (\x,3) {\c};
  }
  \foreach\i in {0,...,3}
     {\draw[] (.2,\i-.5) -- (4.8,\i-.5);}
  
  \draw[red,dashed] (0, 1) -- (1, 1);
  \conn (1, 1) (2, 0);
  \conn (2, 0) (3, 1);
  \conn (3, 1) (4, 2);
  \draw[red,dashed] (4, 2) -- (5, 2);
  
  \node at (1, 1) {\Sr};
  \node at (2, 0) {\Sl};
  \node at (3, 1) {\Sl};
  \node at (4, 2) {\Sr};
  \end{tikzpicture}
  \begin{tikzpicture}[
      draw=black,line width=.8pt,
      box/.style = {rounded corners=.5},
      head/.style = {red},
      code/.style = {black!20},
      xscale=1/1.7, yscale=1/2,
            baseline=0
  ]
  \newcommand{\gate}[3]{
    \draw[fill=white,box] (#1,#2-1/2-.3) rectangle (#1+1,#2-1/2+1+.3);
    \node[align=center] at (#1+.5,#2) {#3};
  }
  \DeclareDocumentCommand{\conn}{ r() r() }{
    \draw[head] (#1) -| ($(#1) !.5! (#2)$) |- (#2);
  }  
  \node[right] at (-.25, 4) {\textsc{gate} (\texttt{G})};
  
  \foreach\x/\c in {1/0, 2/0, 3/1, 4/1, 6/0} {
    \draw[on background layer,thin,black!20] (\x,2.7) -- (\x,-1.7);
    \node at (\x,3) {\c};
  }
  \foreach\i in {-1,...,3}
     {\draw[] (.2,\i-.5) -- (6.8,\i-.5);}
  
  \draw[red,dashed] (0, 0) -- (1, 0);
  \conn (1, 0) (2, -1);
  \conn (2, -1) (3, 0);
  \conn (3, 0) (4, 1);
  \draw[red,dashed] (6, 2) -- (7, 2);
  
  \node at (1, 0) {\Sr};
  \node at (2, -1) {\Sl};
  \node at (3, 0) {\Sl};
  \node at (4, 1) {\Sl};
  \gate{4.5}{1}{$\op G$};
  \node at (6, 2) {\Sr};
  \end{tikzpicture}
  \begin{tikzpicture}[
      draw=black,line width=.8pt,
      box/.style = {rounded corners=.5},
      head/.style = {red},
      code/.style = {black!20},
      xscale=1/1.6, yscale=1/2,
            baseline=0
  ]
  \newcommand{\gate}[3]{
    \draw[fill=white,box] (#1,#2-1/2-.3) rectangle (#1+1,#2-1/2+1+.3);
    \node[align=center] at (#1+.5,#2) {#3};
  }
  \DeclareDocumentCommand{\conn}{ r() r() }{
    \draw[head] (#1) -| ($(#1) !.5! (#2)$) |- (#2);
  }  
  \node[right] at (-.25, 4) {\textsc{inv} (\texttt{I})};
  
  \foreach\x/\c in {1/0, 2/1, 4/1, 5/0, 6/0} {
    \draw[on background layer,thin,black!20] (\x,2.7) -- (\x,-1.7);
    \node at (\x,3) {\c};
  }
  \foreach\i in {-1,...,3}
     {\draw[] (.2,\i-.5) -- (6.8,\i-.5);}
  
  \draw[red,dashed] (0, 2) -- (1, 2);
  \conn (1, 2) (2, 1);
  \conn (4, 0) (5, -1);
  \conn (5, -1) (6, 0);
  \draw[red,dashed] (6, 0) -- (7, 0);
  
  \node at (1, 2) {\Sr};
  \node at (2, 1) {\Sr};
  \node at (4, 0) {\Sr};
  \node at (5, -1) {\Sl};
  \gate{2.5}{1}{$\op G^\dagger$};
  \node at (6, 0) {\Sr};
  \end{tikzpicture}
  \caption{Execution order of an arbitrary circuit approximated using the universal gate $\op G$ and its inverse $\op G^\dagger$. Each elementary operation start and end in a configuration \Sr, where the last program bit is a 0---like this, each circuit can be constructed by a simple combination of these elementary operations, with a constant overhead. Observe that both gate application and inverse gate application do not end on the same line, which means that if we want to apply $\op G$ at the current position, we have to execute \texttt{DGD}, and similarly \texttt{UIU} for $\op G^\dagger$. The specific quantum gate $\op G$ that we use is proven to be universal in \cref{lem:single-gate-universality}.}
  \label{fig:circuit-to-desc}
\end{figure}

For suitable circuits $C_{\op R}$, this construction is thus a history state Hamiltonian which encodes an arbitrary QRM.

\begin{remark}\label{rem:side-lengths}
If we want to encode a QRM which runs for $t$ applications of the QRM head, we necessarily need $H\ge 2t(D+W)$ for our cube. Furthermore, if the QRM head acts on two qudits of dimension $d$, the circuit $C_{\op R}$ acts on $m=\lceil\log_2 d\rceil$ qubits; we thus require $D+W \equiv 0 \pmod m$.
\end{remark}

For a fixed cube depth $D$ encoding some \BQEXP QRM, we need to ensure that we can tune the remaining two free parameters $W$ and $H$---width and height of the cuboid---to provide enough space and time for the computation to run and terminate, while at the same time keeping the error introduced by approximating the QRM head unitary within bounds. This is captured in the following technical lemma.

\begin{lemma}\label{lem:which-cube}
    Take a \BQEXP promise problem $\Pi$. For any precision $\delta>0$ and instance $l\in\Pi$, there exist cube parameters $W, H, D=\BigO(\exp\poly (|l|,\log 1/\delta))$ which allow a verifier ring machine to be executed on the cube for instance $l$ to within precision $\delta$.
\end{lemma}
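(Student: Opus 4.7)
My plan is to work backwards from the target precision $\delta$: fix a verifier QRM and its runtime, choose a per-head approximation error that accumulates to at most $\delta$ across all head applications, read off the resulting program length, and finally translate into the three cube dimensions via \cref{rem:side-lengths}.

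First, I would apply \cref{lem:qmaexp-qrm} to $\Pi$ to obtain an $\exp$-time terminating QRM verifier $(\op U,n\ket{q_i},\H_f)$ for instance $l$: the head $\op U$ acts on two qudits of constant dimension $d$, the ring has $n = \BigO(\exp\poly|l|)$ qudits (large enough to carry both the instance and the exponentially-sized witness), and the machine halts after $t = \BigO(\exp\poly|l|)$ head applications. Next, using \cref{lem:single-gate-universality} together with Solovay--Kitaev, I would approximate $\op U$ by a circuit $\widetilde{\op U}$ over $\{\op G,\op G^\dagger\}$ acting on adjacent qubits with $\|\op U-\widetilde{\op U}\|\le\epsilon$ and length $\BigO(\log^4(1/\epsilon))$. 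Since operator-norm errors add by the triangle inequality over $t$ rounds, setting $\epsilon := \delta/(2t)$ keeps the total state error below $\delta/2$; any residual slack in the acceptance probability can be absorbed by \cref{rem:probability-amp}. Crucially $\log(1/\epsilon)=\BigO(|l|+\log1/\delta)$, so one approximated head costs only $\poly(|l|,\log1/\delta)$ gates.

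Having $\widetilde{\op U}$, I would assemble the binary program $C_{\op R}$ as a preamble that writes $l$ onto the ring ($\BigO(|l|)$ program bits) followed by the gate-by-gate translation of $\widetilde{\op U}$ into the primitives \texttt{S}, \texttt{D}, \texttt{U}, \texttt{DGD}, \texttt{UIU} according to \cref{tab:encoding} and \cref{fig:circuit-to-desc} ($\BigO(\log^4(1/\epsilon))$ program bits). The total program length is $L = \poly(|l|,\log1/\delta)$. Since the top-face binary counter exposes a string of length $\log_2 D$ on the computation edge, I would set
\[
    D := 2^L = \BigO(\exp\poly(|l|,\log1/\delta)),
\]
$W := nm + r$ with $m := \lceil\log_2 d\rceil$ and $r\in\{0,\dots,m-1\}$ the smallest padding making $D+W\equiv 0\pmod m$, and $H := 2t(D+W)$, so that \cref{rem:side-lengths} is satisfied and all three side lengths remain $\BigO(\exp\poly(|l|,\log1/\delta))$.

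The main difficulty is balancing the polylogarithmic Solovay--Kitaev gate count against the exponential QRM runtime $t$ without a doubly-exponential blowup in $D$ through the binary counter; the saving grace is that $\log^4(t/\delta)=\poly(|l|,\log1/\delta)$ because $t$ is only singly exponential in $|l|$, so the program length $L$ (and hence $\log_2 D$) stays polynomial and $D$ itself stays singly exponential. A secondary technicality is ensuring that the padding enforcing $D+W\equiv0\pmod m$ and the lower bound $H\ge2t(D+W)$ are compatible with the requirement that $D$ is a power of two produced by the binary counter, but these are standard adjustments that only affect lower-order terms and leave the asymptotics unchanged.
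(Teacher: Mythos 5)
Your high-level plan tracks the paper's proof closely: invoke \cref{lem:qmaexp-qrm}, approximate the head with Solovay--Kitaev over $\{\op G,\op G^\dagger\}$ at precision $\epsilon=\Theta(\delta/t)$, observe $\log^c(1/\epsilon)=\BigO(\poly(|l|,\log 1/\delta))$, and then read off $D$, $W$, $H$ from \cref{rem:side-lengths}. The final asymptotics you obtain are correct.

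However, there is a genuine gap in the middle. You assert that the QRM head $\op U$ acts on two qudits of \emph{constant} dimension $d$ and then compensate by bolting on ``a preamble that writes $l$ onto the ring.'' Both halves of this are problematic. First, in the paper's reduction the head $\op R$ must contain a hard-coded description of the instance $l$ so that the QRM can produce $l$ on the tape from a blank start; this is exactly why the paper sets $d=\BigO(\poly|l|)$, which is where the $\poly|l|$ factor in $|C_{\op R}|=\BigO(\poly|l|\cdot\log^c(1/\epsilon))$ comes from (via the $\poly(d)\cdot\log^c(1/\epsilon)$ gate count for approximating a $2m$-qubit unitary with $m=\lceil\log_2 d\rceil$). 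With a constant-dimensional head, $\op R$ simply cannot know $l$. Second, the ``one-time preamble'' does not fit the model: the cube exposes the \emph{same} binary program $C_{\op R}$ periodically along the computation edge, and $C_{\op R}$ is applied once per revolution of the head symbol. There is no notion of ``first pass''---whatever gates the preamble performs would be re-applied on every subsequent cycle. Since the encoded operations are unitaries (e.g.\ $X^{b}$ to write a bit), re-applying the preamble would undo or scramble the very data it wrote, rather than being idempotent. Making the preamble self-disabling via a flag qubit would require control machinery whose cost again scales with $|l|$ and which is not part of the cube's elementary operation set, so it does not rescue the argument.

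The repair is exactly the paper's choice: drop the constant-$d$ assumption, let $d=\BigO(\poly|l|)$ with $l$ baked into $\op R$, and carry the $\poly(d)$ factor through Solovay--Kitaev into $|C_{\op R}|$. Everything downstream of that point in your proposal (the choice $\epsilon=\Theta(\delta/t)$, the bound $\log^c(1/\epsilon)=\BigO(\poly(|l|,\log 1/\delta))$, setting $D=2^{|C_{\op R}|}$, padding $W$ so that $D+W\equiv 0\pmod m$ and the ring is long enough, and $H=2t(D+W)$) is fine and matches the paper.
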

\begin{proof}
Let $l\in\Pi$. A \BQEXP witness computation for this instance $l$ of size $|l|$ can be performed with a QRM with head unitary $\op R\in\set{SU}((\field C^d)^{\otimes 2})$ for some $d$.  We require that the QRM head $\op R$ contains a description of instance $l$; this means that $d$---the size of each of the two qudits that $\op R$ acts on---depends on the size of the instance, i.e.\ $d=\BigO(\poly |l|)$.
Denote with $t$ the number of steps the ring machine needs to perform to run the entire verifier computation.

\begin{enumerate}
\item
In \cref{lem:single-gate-universality} we show that there exists a specific 2-qubit gate $\op G$ which is universal for quantum computation, even when only applied to adjacent qubits.
\item
Using S-K and a circuit encoding as described in \cref{fig:circuit-to-desc} using gates $\op G$ and its inverse $\op G^\dagger$, approximate the QRM head $\op R$ with circuit $C_{\op R}$ to some error $\epsilon\le\delta/t$, where $\delta$ is the overall precision which we require for the verifier. Each qudit $\field C^d$ of the QRM verifier is encoded in $ m=\lceil \log_2 d \rceil$ qubits.
The circuit $C_{\op R}$ thus acts on $(\field C^2)^{\otimes 2m}$, i.e.\ $m$ qubits.
By \cite{Nielsen2010}, approximating an $n$-qubit unitary to within precision $\epsilon$ requires $\BigO(n^2 4^n log^c(n^2 4^n/\epsilon))$ gates (for some $c\le4$), if using their gateset; for our purposes it suffices to know that the number of gates required to approximate $\op R$ to within precision $\epsilon$ scales as $\BigO(\poly(d)\times \log^c(1/\epsilon))$.
\item
The circuit description is thus of length $|C_{\op R}|=\BigO(\poly |l| \times \log^c(1/\epsilon))$ and therefore we have to require that the depth of the cube $D=\BigO(\exp(|C_{\op R}|))=\BigO(\exp(\poly |l| \times \log^c(1/\epsilon)))$. 
\item
The front sidelength $W$ is increased...
\begin{itemize}
    \item to make the ring $r=W+D$ large enough for the computation, if it is not already, and
    \item to make the ring size an integer multiple of $m=\lceil \log_2 d \rceil$.
\end{itemize}
\item Set $H=2t(W+D)$.
\end{enumerate}
With $\epsilon\le\delta/t$ and $t=\BigO(\exp\poly |l|)$, we further have $\log^c 1/\epsilon\le\log^c(t/\delta)=\BigO(\poly (|l|,\log 1/\delta))$, and the claim of the lemma follows.
\end{proof}

\begin{remark}\label{rem:cube-accuracy}
If we require cube parameters of $\BigO(\exp\poly |l|)$, we can demand a computation accuracy of at most $\delta=\Omega(1/\exp\poly|l|)$.
\end{remark}
\begin{proof}
If we demand the two scaling parameters in \cref{lem:which-cube} to be equal, we have
\begin{alignat*}{2}
    &\quad                 &\exp \log^4\left(1/\delta\right) &= \BigO(\exp \poly |l|)\\
    &\Leftrightarrow\quad  &\log^4\left(1/\delta\right)      &= \BigO(\poly |l|)\\
    &\Leftrightarrow\quad  &\log\left(1/\delta \right)       &= \BigO(\poly |l|)\\
    &\Leftrightarrow\quad  &\delta                           &= \Omega(1/\exp \poly |l|).
    \tag*{\qedhere}
\end{alignat*}
\end{proof}

\subsection{Static Lattice Constraints}\label{sec:static-constraints}

\subsubsection{Lattice Structure}
We will work with a face-centered cubic lattice of 4-dimensional qudits. All interactions will be at most 4-local and translationally-invariant. The system will have open boundary conditions; in particular, we do not cut off interactions at the boundary or introduce boundary constraints of any kind.
For the sake of clarity, when writing out constraints in the following, we will usually ignore parts of the sublattice, implicitly assuming that any interaction term is extended trivially everywhere else. When refering to layer A and if not explicitly mentioned, we mean the black sublattice, and layer B will be the red sublattice with side-centered vertices.

Any ``static'' constraint---i.e.\ the terms in the following four subsections---will be translated into local Hamiltonian terms diagonal in the computational basis; see \cref{sec:summary-static} for details.

\subsubsection{Constraining the Lattice Bulk}\label{sec:constraining-bulk}

Denoting with \inline{\Rx} a special symbol in the red sublattice, we want to constrain the lattice to be in this state in the bulk, and in its complement on the topmost red face, as well as the outermost side faces. We first give a bonus of 1 to spins in the B sublattice in configuration
\[
\tikz[lattice]{
    \draw[vertex=red,faint] (0, 0, 0) -- (0, 0, -1);
    \Rx;
    \R{0}{0}{-1};
}
\]
All red layers but the top one will then be in state \inline\Rx. We then give a bonus of 1 to all of the following configurations:
\[
\tikz[lattice]{
    \draw[vertex=red,faint] (0, 0, 0) -- (0, 0, -1);
    \draw[vertex=red,faint] (0, 0, 0) -- (0, 1, 0);
    \draw[vertex=red,faint] (0, 0, 0) -- (1, 0, 0);
    \Rx;
    \R{0}{0}{-1};
    \R{0}{1}{0};
    \R{1}{0}{0};
}
\quad
\tikz[lattice]{
    \draw[vertex=red,faint] (0, 0, 0) -- (0, 0, -1);
    \draw[vertex=red,faint] (0, 0, 0) -- (0, 1, 0);
    \draw[vertex=red,faint] (0, 0, 0) -- (-1, 0, 0);
    \Rx;
    \R{0}{0}{-1};
    \R{0}{1}{0};
    \R{-1}{0}{0};
}
\quad
\tikz[lattice]{
    \draw[vertex=red,faint] (0, 0, 0) -- (0, 0, -1);
    \draw[vertex=red,faint] (0, 0, 0) -- (0, -1, 0);
    \draw[vertex=red,faint] (0, 0, 0) -- (1, 0, 0);
    \Rx;
    \R{0}{0}{-1};
    \R{0}{-1}{0};
    \R{1}{0}{0};
}
\quad
\tikz[lattice]{
    \draw[vertex=red,faint] (0, 0, 0) -- (0, 0, -1);
    \draw[vertex=red,faint] (0, 0, 0) -- (0, -1, 0);
    \draw[vertex=red,faint] (0, 0, 0) -- (-1, 0, 0);
    \Rx;
    \R{0}{0}{-1};
    \R{0}{-1}{0};
    \R{-1}{0}{0};
}
\quad
\]
This leaves the top layer unchanged. Summarizing the bonus terms so far, all other B layers, as seen from the top, are then in the configuration
\[
\begin{tikzpicture}[
    lattice
]
    % layer B
    \foreach \x in {-6,...,-2} {
        \draw (\x,1,0) -- (\x,4.2,0);
        \foreach \y in {1,2,3} {
            \Rx[opacity=1][\x][\y+.5][0];
            \node[xshift=-6,yshift=3] at (\x, \y+.5, 0) {\footnotesize
                \ifnumcomp{\x}{=}{-6}{
                    2
                }{
                    \textbf 5
                }
            };
        }
    }
    \foreach \y in {1,2,3,4}
        \draw (-6,\y,0) -- (-1.8,\y,0);
    \foreach \x in {-6,...,-3}
        \foreach \y in {1,2,3,4} {
            \Rx[opacity=1][\x+.5][\y][0];
            \node[xshift=6,yshift=4] at (\x+.5, \y, 0) {\footnotesize
                \ifnumcomp{\y}{=}{1}{
                    2
                }{
                    \textbf 5
                }
            };
       }

\end{tikzpicture}
\]
We then give a global 1-local penalty to \inline\Rx with strength -3. The top B layer will thus be in the complement of \inline\Rx (which we denote with \inline\Ro), while all the other B layers look like
\[
\begin{tikzpicture}[
    lattice
]
    % layer B
    \foreach \x in {-6,...,-2} {
        \draw (\x,1,0) -- (\x,4.2,0);
        \foreach \y in {1,2,3} {
            \ifnumcomp{\x}{=}{-6}{
                \Ro[\x][\y+.5][0];
            }{
                \Rx[opacity=1][\x][\y+.5][0];
                \node[xshift=-6,yshift=3] at (\x, \y+.5, 0) {\footnotesize 2};
            }
        }
    }
    \foreach \y in {1,2,3,4}
        \draw (-6,\y,0) -- (-1.8,\y,0);
    \foreach \x in {-6,...,-3}
        \foreach \y in {1,2,3,4} {
            \ifnumcomp{\y}{=}{1}{
                \Ro[\x+.5][\y][0];
            }{
                \Rx[opacity=1][\x+.5][\y][0];
                \node[xshift=6,yshift=4] at (\x+.5, \y, 0) {\footnotesize 2};
            }
       }

\end{tikzpicture}
\]

\subsubsection{Binary Counter}
The top layer of type B will carry a binary counter tiling, which translates the side length $D$ into a binary representation on the top front edge. In order to achieve this, we need to initialize the top back edge of the cube to all 0, and the top right edge to all 1. Since we do not want to use distinct interactions on the outside layers, but have open boundary conditions, we have to find a configuration in the pre-constrained cube which only occurs on the top right and back edge, respectively. The following configuration is such an example:
\begin{equation}\label{eq:top-side-interaction}
\begin{tikzpicture}[    
	lattice
]    
    % vertical lines
    \foreach \x/\y in {0/1,0/2}
        \draw (\x,\y,0) -- (\x,\y,1.7);
    
    % lines in x direction
    \foreach \y/\z in {1/0,2/0,1/1}
        \draw (0,\y,\z) -- (0.8,\y,\z);
        
    % lines in z direction
    \foreach \x/\z in {0/0,0/1}
        \draw (\x,0.8,\z) -- (\x,2.8,\z);
        
    % black dots
    \foreach \x in {0}
        \foreach \y in {1,2}
            \foreach \z in {0,1}
                \B \x \y \z;
        
    % top B layer
    \draw[dashed] (0, 0, .5) -- (0, 3, .5);
    \draw[draw=none, fill=black, fill opacity=.2] (1, 0, .5) -- (0, 0, .5) -- (0, 3, .5) -- (1, 3, .5);
    \draw[<-] (.7, .5, .5) -- (3, 1, .5) node[anchor=east] {top B layer};
    
    % interaction
    \draw[intA] (0, 2.5, 1.5) -- (0, 1.5, 0.5) -- (.5, 1, 0.5) (0, 1.5, .5) -- (0, 1.5, 1.5);

    % red dots
    \foreach \x/\y/\z in {0/1.5/.5, 0/1.5/1.5, 0/2.5/.5, 0/2.5/1.5, .5/1/.5, .5/2/.5}
        \Ro[\x][\y][\z];
    \foreach \x/\y/\z in {.5/1/1.5, .5/2/1.5}
        \Rx[][\x][\y][\z];
\end{tikzpicture}
\end{equation}

Since only the top and outer layers have red spins in configuration \inline{\Ro}, this four-local interaction allows us to pick out the top right boundary of the top layer, and to constrain it to state 1. A similar interaction allows constraining the top back layer to 0. The top B layer then looks like
\[
\begin{tikzpicture}[
    lattice,
]
    % layer B
    \foreach \x in {-6,...,-2} {
        \draw (\x,0.8,0) -- (\x,4,0);
        \foreach \y in {1,2,3} {
            \ifnumcomp{\x}{=}{-6}{
                \draw[vertex=tileRed, draw=none] (\x, \y, 0) -- (\x, \y+1, 0) -- (\x+.5, \y+.5, 0) -- cycle;
                \node[black,align=center] at (\x, \y+.5, 0) {\$1};
            }{
                \Ro[\x][\y+.5][0];
            }
        }
    }
    \foreach \y in {1,2,3,4}
        \draw (-6,\y,0) -- (-1.8,\y,0);
    \foreach \x in {-6,...,-3}
        \foreach \y in {1,2,3,4} {
            \ifnumcomp{\y}{=}{4}{
                \node[black,align=center] at (\x+.6, \y, 0) {\$0};
            }{
                \Ro[\x+.5][\y][0];
            }
       }
\end{tikzpicture}
\]
Since this is the only B layer with spins in state \inline\Ro, we can use the following tiles from \cite{Patitz2014} to get the desired binary counting layer.
\[
\tile0000
\quad
\tile1101
\quad
\tile1010
\quad
\tile0110
\]
It is straightforward to verify that the general tile
\[
\tile absc
\]
obeys the rules $c=\text{carry of }a+b$ and sum $s=a\oplus_2b$.

\subsubsection{Winding Program Diagonally}
We use an interaction similar to \cref{eq:top-side-interaction} to shuffle the program around the cube in a cyclic fashion, as depicted in \cref{fig:cube-structure}:
\DeclareDocumentCommand{\shufFig}{ m }{%
\begin{tikzpicture}[    
	lattice
]
    % vertical lines
    \foreach \x/\y in {0/1,0/2}
        \draw (\x,\y,0) -- (\x,\y,1.7);
    
    % lines in x direction
    \foreach \y/\z in {1/0,2/0,1/1}
        \draw (0,\y,\z) -- (0.8,\y,\z);
        
    % lines in z direction
    \foreach \x/\z in {0/0,0/1}
        \draw (\x,0.8,\z) -- (\x,2.8,\z);
        
    % black dots
    \foreach \x in {0}
        \foreach \y in {1,2}
            \foreach \z in {0,1}
                \B \x \y \z;
    
    % interaction
    \draw[intA] (0, 2.5, 1.5) -- (0, 1.5, 0.5) -- (0.5, 2, 0.5);
    
    \node[red] at (0, 1.5, .5) {\${#1}};
    \node[red] at (0, 2.5, 1.5) {\${#1}};

    % red dots
    \foreach \x/\y/\z in {0/2.5/.5, 0/1.5/1.5}
        \Ro[\x][\y][\z];
    \foreach \x/\y/\z in {.5/1/.5, .5/2/.5}
        \R{\x}{\y}{\z};
    \foreach \x/\y/\z in {.5/1/1.5, .5/2/1.5}
        \Rx[][\x][\y][\z];
\end{tikzpicture}%
}
\[
    \shufFig0 \quad\text{and}\quad \shufFig1
\]
Observe that, by including the red qudit one layer in, this interaction does indeed only apply to the front right face; similar interactions on the other three faces achieve the desired program copying around the cube sides. Additionally, by conditioning on if this inner qudit is either  \inline{\Rx} or \inline{\Ro}, we can apply a different rule at the top layer. In particular at the top layer of the front right face we want to flip the bit when copying down so that there are 1's on the top layer but 0's on the layer below - see \cref{fig:cube-structure}.

On the corners, we use a similar shape of interaction, i.e.
\[
\begin{tikzpicture}[    
	lattice
]
    % vertical lines
    \foreach \x/\y in {1/0,0/0,0/1}
        \draw (\x,\y,-.3) -- (\x,\y,1.7);
    
    % lines in x direction
    \foreach \y/\z in {0/0,1/0,0/1}
        \draw (0,\y,\z) -- (1.7,\y,\z);
        
    % lines in z direction
    \foreach \x/\z in {0/0,1/0,0/1}
        \draw (\x,0,\z) -- (\x,1.7,\z);
        
    % black dots
    \foreach \x in {0,1}
        \foreach \y in {0,1}
            \foreach \z in {0,1}
                \B \x \y \z;
    
    % interaction
    \draw[intA] (1,.5,.5) -- (.5, 0, .5) -- (0,.5,1.5);

    % red dots
    \Rx[][1][.5][.5];
    \Rx[][.5][1][.5];
    \Rx[][1][.5][1.5];
    \Rx[][.5][1][1.5];
    \node[red] at (1.5, 0, .5) {\${$p_{i+1}$}};
    \node[red] at (.5, 0, .5) {\${$p_{i}$}};
    \node[red] at (0, .5, .5) {\${$p_{i-1}$}};
    \node[red] at (0, 1.5, .5) {\${$p_{i+1}$}};
    \node[red] at (1.5, 0, 1.5) {\${$p_{i+2}$}};
    \node[red] at (.5, 0, 1.5) {\${$p_{i+1}$}};
    \node[red] at (0, .5, 1.5) {\${$p_{i}$}};
    \node[red] at (0, 1.5, 1.5) {\${$p_{i-1}$}};
\end{tikzpicture}    
\]
and similarly for all other corners.

Note that in \cref{sec:comp}, we will need to temporarily replace red program bits with a special symbol \textcolor{red}! indicating that the application of a gate is happening in the next step, so we exclude this case from the constraints in this section (i.e.\ we allow \emph{either} $p_i$ and $p_i$, or $p_i$ and \textcolor{red}! to appear around the computational corner, and similarly for the diagonal face constraints bordering the computation edge).

As none of the dynamic transition rules below ever changes the number of head symbols (of which \textcolor{red}! is one), we can rule out the cases where there is more than one \textcolor{red}! or other head symbol present at any one time---we analyse these branching cases in detail in \cref{sec:branching}.

\subsubsection{Constraining layer A qudits}

We label the states of the green face-centred qubits of the layer A type with the alphabet $\{A,B,C,0\}$. 
For all such green lattice qubits we apply a bonus of strength $1/2$ to configuration 0, so that this state is preferred.

In order to access two sequential program bits $p_i$ and $p_{i+1}$ with a single three-local interaction on the computation edge, we add a strength 1 interaction which constrains the front column of the layer A green sublattice to a state $P_i \in \{A,B,C\}$ depending on the two neighbouring computation bits, i.e.
\[
\begin{tikzpicture}[    
	lattice
]
    % vertical lines
    \foreach \x/\y in {1/0,0/0,0/1}
        \draw (\x,\y,-.3) -- (\x,\y,1.7);
    
    % lines in x direction
    \foreach \y/\z in {0/0,1/0,0/1}
        \draw (0,\y,\z) -- (1.7,\y,\z);
        
    % lines in z direction
    \foreach \x/\z in {0/0,1/0,0/1}
        \draw (\x,0,\z) -- (\x,1.7,\z);
        
    % black dots
    \foreach \x in {0,1}
        \foreach \y in {0,1}
            \foreach \z in {0,1}
                \B \x \y \z;
    
    % interaction
    \draw[intA] (.5, 0, .5) -- (.5, .5, 0) -- (0, .5, .5);

    % red dots
    \node[green] at (.5, .5, 0) {\${$P_i$}};
    \node[red] at (1.5, 0, .5) {\${$p_{i+2}$}};
    \node[red] at (.5, 0, .5) {\${$p_{i+1}$}};
    \node[red] at (0, .5, .5) {\${$p_{i}$}};
    \node[red] at (0, 1.5, .5) {\${$p_{i-1}$}};
\end{tikzpicture}    
\]
Note that this interaction will have no effect anywhere else in the lattice, as at least one of the two red program bits will be \inline\Rx.

The rule which governs what state $P_i$ is constrained to will depend on the tuple $p_i$ and $p_{i+1}$, and is derived from \cref{tab:encoding}. The idea is that $P_i=f(p_i,p_{i+1})$ will signify what is to happen at the computation edge. Looking at \cref{tab:encoding}, we see that at each stage we either:
\begin{itemize}
\item[A.] \textbf{A}pply a gate (either $\op G$ or its inverse $\op G^\dagger$, depending on where the arrow is coming from), 
\item[B.] go \textbf{B}ackwards (i.e. change the direction of the arrow),
\item[C.] or \textbf{C}ontinue in the same direction.
\end{itemize}
Given the encoding of \cref{tab:encoding}, we therefore take $P_i=f(p_i,p_{i+1})$ for a function $f$ given by
$$
    f(p_i,p_{i+1})=\begin{cases}
        B & \text{if $p_{i+1}=0$,} \\
        C & \text{if $p_i=0$ and $p_{i+1}=1$, and} \\
        A & \text{if $p_i=p_{i+1}=1$.}
    \end{cases}
$$
Due to the aforementioned $1/2$ bonus which applies at all green spins, the remainder of layer A is in configuration $0$.

\subsubsection{Summary of static constraints}\label{sec:summary-static}

As explained in the main text under ``Tiling Construction'', we take all static constraints listed so far and translate them to diagonal and local projectors $\op h_i$.
This allows us to write a 4-local, translationally-invariant classical Hamiltonian $\Hstat=\sum_{\vec x}\sum_{\op h_i}\op h_i^{\vec x}$ (i.e.\ product and diagonal in the computational basis of each spin) with a ground space spanned by states with the following properties.
\begin{enumerate}
\item Any black vertex spin in layer A is unconstrained.
\item The red layer B spins will be in a state as depicted in \cref{fig:cube-structure}, i.e.\ on the top cuboid face, they represent a binary counter translating the depth $D$ of the cuboid into a binary description of $D$ on the top front edge. This binary string $s=p_1\ldots p_T$ is wound down diagonally around the cube, which expresses $s$ periodically on the front computation edge. \emph{Only} the spins adjacent to this edge are also allowed in a configuration \textcolor{red}!.  In the bulk of the cube all the way to the bottom-most layer, the red spins are in state \inline\Rx.
\item The green layer A is in configuration 0 everywhere but on the front edge; there, the spins there are in a configuration depending on the two adjacent program bits $p_i$ and $p_{i+1}$, as outlined above.
\end{enumerate}
This Hamiltonian $\Hstat$ is gapped with a size-independent constant gap, and we can rescale the interactions so far and shift the overall energy to assume that this ground space as detailed above has energy zero, and any other configuration has energy lower-bounded by 1.

In the next sections, we will explain the history state construction, which---within this ground space of $\Hstat$---will represent a valid QRM evolution for the circuit represented by the binary string $s$.

\subsection{Dynamic Constraints on Computational Layer}\label{sec:dynamic-constraints}
The ``dynamic'' history state transition rules will be translated in a similar fashion to terms as in \cref{eq:history-state-ham,eq:transition-term}.
We always depict a transition rule as connected by a squiggly arrow $\leadsto$; the notation is self-explanatory: the brighter blue shading indicates the original state, whereas the dull blue shading indicates the target configuration.
To give an example, a transition
$$
    \begin{tikzpicture}[lattice]
        \draw (0, -.3, 0) -- (0, 3.3, 0);
        \draw[intA] (0, 1, 0) -- (0, 2, 0);
        \node at (0, 0, 0) {\${$a$}};
        \node at (0, 1, 0) {\${$b$}};
        \node at (0, 2, 0) {\${$c$}};
        \node at (0, 3, 0) {\${$d$}};
    \end{tikzpicture}
    \raisebox{0.5cm}{$\leadsto$}\quad
    \begin{tikzpicture}[lattice]
        \draw (0, -.3, 0) -- (0, 3.3, 0);
        \draw[intB] (0, 1, 0) -- (0, 2, 0);
        \node at (0, 0, 0) {\${$a$}};
        \node at (0, 1, 0) {\${$b'$}};
        \node at (0, 2, 0) {\${$c'$}};
        \node at (0, 3, 0) {\${$d$}};
    \end{tikzpicture}
$$
would be translated into a two-local term $\op h=\ketbra{bc}+\ketbra{b'c'} - \ketbra{b'c'}{bc} - \ketbra{bc}{b'c'}$, and correspondingly with an extra quantum register if $b$ or $c$ were labelling vertices that carry a qubit (i.e.\ the black layer A sublattice vertices).

\subsubsection{Moving Qubits}\label{sec:moving-constraints}
The black sublattice (A layers) comprises the alphabet $\{0, 1, \Sr, \Sl\}$, where we treat the $0,1$-subspace as a qubit, i.e. $\field C^2$. The right and left arrows are markers to indicate where to move qubits to. As an example on the front face, we have a left moving sequence
\DeclareDocumentCommand{\movingFig}{ m m m m m m m m g }{%
\begin{tikzpicture}[    
	lattice
]
    \foreach \x in {.5, 1.5, 2.5} \Ro[\x][0][.5];
    \foreach \x in {0, ..., 3} \draw (\x, 0, -.2) -- (\x, 0, 1.2);
    \foreach \z in {0, 1} \draw (-.2, 0, \z) -- (3.2, 0, \z);
    
    #9;
    
    \node at (3, 0, 0) {\${#1}};
    \node at (2, 0, 0) {\${#2}};
    \node at (1, 0, 0) {\${#3}};
    \node at (0, 0, 0) {\${#4}};
    \node at (3, 0, 1) {\${#5}};
    \node at (2, 0, 1) {\${#6}};
    \node at (1, 0, 1) {\${#7}};
    \node at (0, 0, 1) {\${#8}};
\end{tikzpicture}%
}
\[
    \movingFig{a}{b}{c}{\Sl}{x}{y}{z}{}{
        \draw[intA] (2, 0, 1) -- (1, 0, 0) -- (0, 0, 0);
    }
    \quad\leadsto\quad
    \movingFig{a}{b}{\Sl}{y}{x}{c}{z}{}{
        \draw[intB] (2, 0, 1) -- (1, 0, 0) -- (0, 0, 0);
        \draw[intA] (3, 0, 1) -- (2, 0, 0) -- (1, 0, 0);
    }
    \quad\leadsto\quad
    \movingFig{a}{\Sl}{x}{y}{b}{c}{z}{}{
        \draw[intB] (3, 0, 1) -- (2, 0, 0) -- (1, 0, 0);
    }
\]
and analogously the right moving sequence
\[
    \movingFig{\Sr}{a}{b}{c}{}{x}{y}{z}{
        \draw[intA] (3, 0, 0) -- (2, 0, 0) -- (1, 0, 1);
    }
    \quad\leadsto\quad
    \movingFig{y}{\Sr}{b}{c}{}{x}{a}{z}{
        \draw[intB] (3, 0, 0) -- (2, 0, 0) -- (1, 0, 1);
        \draw[intA] (2, 0, 0) -- (1, 0, 0) -- (0, 0, 1);
    }
    \quad\leadsto\quad
    \movingFig{y}{z}{\Sr}{c}{}{x}{a}{b}{
        \draw[intB] (2, 0, 0) -- (1, 0, 0) -- (0, 0, 1);
    }
\]

To move qubits around a corner, we use an interaction of the form
\DeclareDocumentCommand{\movingCornerFig}{ m m m m m m m m g }{%
\begin{tikzpicture}[    
	lattice
]
    \foreach \x/\y in {1.5/0, .5/0, 0/.5, 0/1.5} \Ro[\x][\y][.5];
    \foreach \x/\y in {1/0, 0/0, 0/1, 0/2} \draw (\x, \y, -.2) -- (\x, \y, 1.2);
    \foreach \z in {0, 1} \draw (1.8, 0, \z) -- (0, 0, \z) -- (0, 2.2, \z);
    
    #9;
    
    \node at (1, 0, 0) {\${#1}};
    \node at (0, 0, 0) {\${#2}};
    \node at (0, 1, 0) {\${#3}};
    \node at (0, 2, 0) {\${#4}};
    \node at (1, 0, 1) {\${#5}};
    \node at (0, 0, 1) {\${#6}};
    \node at (0, 1, 1) {\${#7}};
    \node at (0, 2, 1) {\${#8}};
\end{tikzpicture}%
}
\[
\movingCornerFig{\Sr}{a}{b}{c}{}{x}{y}{z}{
    \draw[intA] (1, 0, 0) -- (0, 0, 0) -- (0, 1, 1);
}
\quad\leadsto\quad
\movingCornerFig{y}{$\Sr^*$}{b}{c}{}{x}{a}{z}{
    \draw[intB] (1, 0, 0) -- (0, 0, 0) -- (0, 1, 1);
}
\]
at the back, left and right corners (different rules as described in \cref{sec:comp} are used for the front edge) and similarly for going around the corner in the opposite direction.

A few remarks: first note that all the transitions defined so far are unique, i.e. given the cube bulk constrained to \inline\Rx as done in \cref{sec:constraining-bulk}, and for every configuration with only one arrow symbol (the other cases we will penalize as a last step), there exists precisely one forward and one backwards transition. Another important point is how to modify the arrows when going around the circumference of the cube once (marked with a $\Sr^*$ in the last transition rule); at the moment, if we left the arrow type unchanged for every corner, we would not be able to shuffle around the qubits in a circle; on the back face, we would be doing the opposite shuffling operation. Therefore, we change the arrow type according to the following scheme:
\DeclareDocumentCommand{\cubeArrowFig}{ m m m O{} }{%
\begin{tikzpicture}[
    lattice
]

\draw[faint] (3, 3, 0) -- (3, 3, 1.2);
\foreach \x in {0.05, 0.1, ..., 1} {
    \path[
        decoration = {
            markings,
            mark = at position \x with {
                \node[faint] {\${#2}};
            }
        },
        postaction = decorate,
        draw = none
    ] (3, 0, 1) -- (3, 3, 1) -- (0, 3, 1);
}

\draw (0, 0, 0) -- (3, 0, 0) -- (3, 3, 0) -- (0, 3, 0) -- cycle;
\draw (0, 0, 0) -- (0, 0, 1.2);
\draw (3, 0, 0) -- (3, 0, 1.2);
\draw (0, 3, 0) -- (0, 3, 1.2);

\foreach \x in {0.05, 0.1, ..., 1} {
    \path[
        decoration = {
            markings,
            mark = at position \x with {
                \node {\${#1}};
            }
        },
        postaction = decorate,
        draw = none
    ] (3, 0, 1) -- (0, 0, 1) -- (0, 3, 1);
}

\draw[red, ->] (0, 0, 0.8) -- (#3, 0.8);
\draw[
    draw = none,
    postaction = {
        decorate,
        decoration = {
            text along path, #4,
            text = {start},
            text align = {center},
            text color = red
        }
    },
    yshift = 1.5
] (0, 0, 0.8) -- (#3, 0.8);
\end{tikzpicture}%
}
\[
\cubeArrowFig{\Sr}{\Sl}{0, 1}
\quad\text{and}\quad
\cubeArrowFig{\Sl}{\Sr}{1, 0}[reverse path]
\]

\subsubsection{Computation}\label{sec:comp}
\begin{table}

\centering
\begin{tabular}{ccccc}
	\toprule
	tape coming from & program $x_{n-1}x_n$ & operation on qubits & tape going to &  \\ \midrule
	      \Sr        &          00          &        $\1$         &      \Sl      &  \\
	      \Sr        &          01          &        $\1$         &      \Sr      &  \\
	      \Sr        &          10          &        $\1$         &      \Sl      &  \\
	      \Sr        &          11          &   $\op G^\dagger$   &      \Sr      &  \\ \hline
	      \Sl        &          00          &        $\1$         &      \Sr      &  \\
	      \Sl        &          01          &        $\1$         &      \Sl      &  \\
	      \Sl        &          10          &        $\1$         &      \Sr      &  \\
	      \Sl        &          11          &       $\op G$       &      \Sl      &  \\ \bottomrule
\end{tabular}
\caption{Program encoding. The arrow symbols \Sr and \Sl indicate in which direction the ring is moving. Relative to the tape, the current head is thus moving in the opposite direction. With this encoding, any circuit can be executed with the available operations, cf.\ \cref{fig:circuit-to-desc}.}
\label{tab:encoding}
\end{table}

In order to execute any circuit as in \cref{fig:circuit-to-desc}, we have eight elementary operations available, all of which are listed in \cref{tab:encoding}.
It is easy to see that there exists a symmetry between the right- and left-moving arrow; we will thus explain the right-moving arrows (including the application of gate $\op G$) in detail and leave the reverse direction as an exercise to the reader.

\NewDocumentCommand{\computationFig}{ m }{%
\begin{tikzpicture}[
    lattice,
    xscale=.8,
    yscale=.8
]

\foreach \z in {.5, 1.5}
    \foreach \x/\y in {2.5/0, 1.5/0, .5/0, 0/.5, 0/1.5, 0/2.5}
        \Ro[\x][\y][\z];
\foreach \x/\y in {3/0, 2/0, 1/0, 0/0, 0/1, 0/2, 0/3} \draw (\x, \y, -.2) -- (\x, \y, 2.2);
\foreach \z in {0, 1, 2} \draw (3.2, 0, \z) -- (0, 0, \z) -- (0, 3.2, \z);

#1
\end{tikzpicture}%
}
\NewDocumentCommand{\computationBlackVertFig}{ m O{?} O{?} O{?} O{?} O{?} O{?} O{?} }{%
\node at (3, 0, #1) {\${#2}};
\node at (2, 0, #1) {\${#3}};
\node at (1, 0, #1) {\${#4}};
\node at (0, 0, #1) {\${#5}};
\node at (0, 1, #1) {\${#6}};
\node at (0, 2, #1) {\${#7}};
\node at (0, 3, #1) {\${#8}};
}
\NewDocumentCommand{\computationRedVertFig}{ m O{} O{} O{} O{} O{} O{} }{%
\node[red] at (2.5, 0, #1) {\${#2}};
\node[red] at (1.5, 0, #1) {\${#3}};
\node[red] at (0.5, 0, #1) {\${#4}};
\node[red] at (0, 0.5, #1) {\${#5}};
\node[red] at (0, 1.5, #1) {\${#6}};
\node[red] at (0, 2.5, #1) {\${#7}};
}
\DeclareRobustCommand{\funcF}[2]{
\ifthenelse{\equal{#2}{0}}{B}{
\ifthenelse{\equal{#1}{0}}{C}{A}}
}

\NewDocumentCommand{\computationRedVertDefault}{ m m }{%
\node[green] at (0.5, 0.5, 0) {\${$\funcF{#1}{#2}$}};
\computationRedVertFig{0.5}[$p_{4}$][$p_3$][$#2$][$#1$][$p_n$][$p_{n-1}$]
\computationRedVertFig{1.5}[$p_5$][$p_{4}$][$p_3$][$#2$][$#1$][$p_n$]
}

\NewDocumentCommand{\intA}{ r() }{ \fill[draw=none, fill=Cerulean] (#1) circle [radius=8pt]; }
\NewDocumentCommand{\intB}{ r() }{ \fill[draw=none, pattern color=MidnightBlue, pattern=crosshatch dots] (#1) circle [radius=8pt]; }

\noindent
\begin{tabular}[\textwidth]{ p{8.5cm} p{\textwidth-10cm} }
\computationFig{
    \intA (1, 0, 0);
    \intA (0.5, 0.5, 0);
    \intA (0, 0, 1);
    \intA (0, 1, 1);
    \computationRedVertDefault 00
    \computationBlackVertFig{0}[?][?][\Sr]
    \computationBlackVertFig{1}[y][z][a][b][?'][c]
    \computationBlackVertFig{2}
}
&
Consider $p_1p_2=00$ or $10$, so that $f(p_1,p_2)=B$.
The transition is conditioned to only happen if the green qubit in the layer A sublattice is in the $B$ state.
\\[-1cm]
\computationFig{
    \intB (1, 0, 0);
    \intB (0.5, 0.5, 0);
    \intB (0, 0, 1);
    \intB (0, 1, 1);
    \computationRedVertDefault 00
    \computationBlackVertFig{0}[?][?][?']
    \computationBlackVertFig{1}[y][z][a][\Sl][b][c]
    \computationBlackVertFig{2}
}
&
Move ?' up, b to the right and flip the arrow.
This corresponds to simply reverting the direction as in \cref{tab:encoding}.
\end{tabular}

\noindent
\begin{tabular}[\textwidth]{ p{8.5cm} p{\textwidth-10cm} }
\computationFig{
    \intA (1, 0, 0);
    \intA (0.5, 0.5, 0);
    \intA (0, 0, 1);
    \intA (0, 1, 1);
    \computationRedVertDefault 01
    \computationBlackVertFig{0}[?][?][\Sr]
    \computationBlackVertFig{1}[y][z][a][b][?'][c]
    \computationBlackVertFig{2}
}
&
Consider $p_1p_2=01$ so that $f(p_1,p_2)=C$. We perform the same action as above, but keep the arrow direction.
\\[-1cm]
\computationFig{
    \intB (1, 0, 0);
    \intB (0.5, 0.5, 0);
    \intB (0, 0, 1);
    \intB (0, 1, 1);
    \computationRedVertDefault 01
    \computationBlackVertFig{0}[?][?][?']
    \computationBlackVertFig{1}[y][z][a][\Sr][b][c]
    \computationBlackVertFig{2}
}
\end{tabular}

\noindent
\begin{tabular}[\textwidth]{ p{8.5cm} p{\textwidth-10cm} }
\computationFig{
    \intA (1, 0, 0);
    \intA (0.5, 0.5, 0);
    \intA (0, .5, .5);
    \intA (0, 1, 1);
    \computationRedVertDefault 11
    \computationBlackVertFig{0}[?][?][\Sr]
    \computationBlackVertFig{1}[y][z][a][b][?'][c]
    \computationBlackVertFig{2}
}
&
Consider $p_1p_2=11$ so that $f(p_1,p_2)=A$. We want to execute a gate, which requires one intermediate step.
\\[-1cm]
\computationFig{
    \intA (1, 0, 1);
    \intA (0, 0, 1);
    \intA (0, 1, 1);
    \intA (0, .5, .5);
    \intB (1, 0, 0);
    \intB (0.5, 0.5, 0);
    \intB (0, .5, .5);
    \intB (0, 1, 1);
    \node[green] at (0.5, 0.5, 0) {\${$A$}};
    \computationRedVertFig{0.5}[$p_{n-1}$][$p_n$][$1$][!][$p_3$][$p_4$]
    \computationRedVertFig{1.5}[$p_{n-2}$][$p_{n-1}$][$p_n$][$1$][$1$][$p_3$]
    \computationBlackVertFig{0}[?][?][?']
    \computationBlackVertFig{1}[y][z][a][b][1][c]
    \computationBlackVertFig{2}
}
&
We place the computation marker on the right hand side of the computation edge.
This signals that the next step is to perform a gate $\op G$ on a and b.
\\[-1cm]
\computationFig{
    \intB (1, 0, 1);
    \intB (0, 0, 1);
    \intB (0, 1, 1);
    \intB (0, .5, .5);
    \computationRedVertDefault 11
    \computationBlackVertFig{0}[?][?][?']
    \computationBlackVertFig{1}[y][z][a'][\Sr][b'][c]
    \computationBlackVertFig{2}
}
&
Here $\ket{a'}\ket{b'}:=\op G\ket a\ket b$. The program is restored and the arrow left in the right moving configuration, as required by \cref{tab:encoding}.
\end{tabular}

\noindent
We now move the arrow once around the tape and then arrive at the computational corner from the other side.
Observe---as mentioned---that the encoding in \cref{tab:encoding} is mirror-symmetric, so by reversing all the rules above one can implement the same rules---while applying $\op G^{-1}$ instead of $\op G$ when $P_i=A$ for an arrow incoming from the right.

\subsubsection{Computational Input and Output Constraints}
Since the instance is specified within the QRM head, it suffices to provide the computation with a single ancilla $\ket 0$ as input; in case we need more ancillas than available on the front edge, we can augment our verifier as in \cite[fig. 4]{Bausch2016}.
Due to the configuration of the red layer B sublattice, it is straightforward to find a local configuration which only ever appears on a top right corner; more specifically, we utilize the constraint interaction
\[
\begin{tikzpicture}[    
	lattice
]
    % vertical lines
    \foreach \x/\y in {0/0, 0/1, 1/0, 1/1}
        \draw (\x, \y, 0) -- (\x, \y, 1.6);
    \foreach \z in {0, 1} {
        \draw (1.2, 0, \z) -- (0, 0, \z) -- (0, 1.2, \z) (0, 1, \z) -- (1, 1, \z) -- (1, 0, \z) (1.2, 1, \z) -- (1, 1, \z) -- (1, 1.2, \z);
        \B00{\z};
        \B01{\z};
        \B10{\z};
        \B11{\z};
    }
    
    \draw[intA] (0, 0, 0) -- (.5, 0, 1.5) (0, 0, 0) -- (0, .5, 1.5) (0, 0, 0) -- (.5, 1, .5);
    
    \Ro[0][.5][.5];
    \Ro[.5][0][.5];
    \Ro[0][.5][1.5];
    \Ro[.5][0][1.5];
    \Ro[1][.5][.5];
    \Ro[.5][1][.5];
    \Rx[][1][.5][1.5];
    \Rx[][.5][1][1.5];
    \
\end{tikzpicture}
\]
to enforce that the black symbol is either in an arrow configuration, or 0, respectively. The rest of the tape is left unconstrained.

Since there is nothing special about the bottom-most layers A and B, we need to use a pair of interactions to enforce the last black qubit to an accepting state.
This can be readily achieved using
\DeclareDocumentCommand{\figOutPenalty}{ m }{
\begin{tikzpicture}[    
	lattice
]
    % vertical lines
    \foreach \x/\y in {1/0,0/0,0/1}
        \draw (\x,\y,-.3) -- (\x,\y,1.7);
    
    % lines in x direction
    \foreach \y/\z in {0/0,1/0,0/1}
        \draw (0,\y,\z) -- (1.7,\y,\z);
        
    % lines in z direction
    \foreach \x/\z in {0/0,1/0,0/1}
        \draw (\x,0,\z) -- (\x,1.7,\z);
        
    % black dots
    \foreach \x in {0,1}
        \foreach \y in {0,1}
            \foreach \z in {0,1}
                \B \x \y \z;
    
    % red dots
    \foreach \z in {.5,1.5} {
        \Ro[1.5][0][\z];
        \Ro[.5][0][\z];
        \Ro[0][.5][\z];
        \Ro[0][1.5][\z];
        \Rx[][.5][1][\z];
        \Rx[][1][.5][\z];
    };
    
    #1
\end{tikzpicture}
}
\begin{align*}
\figOutPenalty{
    \draw[intA] (.5,1,.49) -- (.5,1,.51) (1,.5,.49) -- (1,.5,.51) (0,.5,.49) -- (0,.5,.51) (0,0,1) -- (0,.01,1);
}
\quad&\text{constraining the black qubit to state $\ket0$, and}\\
\figOutPenalty{
    \draw[intA] (.5,1,1.49) -- (.5,1,1.51) (1,.5,1.49) -- (1,.5,1.51) (0,.5,1.49) -- (0,.5,1.51) (0,0,1) -- (0,.01,1);
}
\quad&\text{giving a bonus to the complement configuration.}
\end{align*}
Everywhere but on the bottom-most layer, the two penalties precisely cancel; however, on the last layer, only the projection onto $\ket 0$ survives, which thus acts as output penalty once the computation is terminated.

\subsubsection{Multiple Heads Penalty}
Since we only want to allow precisely one head on the computational layer, we will penalize any configuration where two heads are next to each other. This finishes our construction.

\subsection{Valid History State Branching}\label{sec:branching}
In this section, we want to analyse all transition rules and show that the parts where they are ambiguous do not break the evolution of the computation.
First note that all constraints in \cref{sec:static-constraints} are static, i.e.\ there are no possibilities for any ambiguities in the configuration. We will call configurations that obey all those static constraints and have precisely one head symbol on the computational layer---i.e. exactly one of \Sr, \Sl or \textcolor{red}!---\emph{valid} configurations.

We will go through each dynamic penalty in \cref{sec:dynamic-constraints} separately.

\begin{enumerate}
\item
In \cref{sec:moving-constraints}, the transition rules for the faces are unambiguous, since they depend on the red symbol to be in a configuration \inline\Ro.
\item
The rules for moving around a corner, however, \emph{can} happen on a face: in this case, the arrow symbol is moved one layer into the bulk. Observe though that none of the movement transitions can apply to the arrow when it is inside of the bulk (apart from moving it back out with a reverse transition), so the computation branches, but the leg does not proceed: we obtain an evolution of the form
\[
\begin{tikzpicture}
\draw (-.5, 0) -- (11.5, 0);
\foreach \x in {0, 2, ..., 10} {
    \draw (\x, 0) -- (\x+1, 1);
    \pgfmathsetmacro\temp{\x/2+1};
    \draw[fill=white] (\x, 0) circle (2pt) node[xshift=8, yshift=-9] {$\ket{\psi_{\pgfmathprintnumber{\temp}}}$};
    \draw[fill=white] (\x+1, 1) circle (2pt) node[xshift=8, yshift=-9] {$\ket{\psi'_{\pgfmathprintnumber{\temp}}}$};
}
\end{tikzpicture}
\]
where all the primed states are redundant, but at most enlarge the overall evolution by a factor of 2.
\item
In \cref{sec:comp}, the computation transitions are unambiguous; observe in particular that there is no transition rule that simply copies the arrow around the computation edge (by construction, see \cref{sec:moving-constraints} ).
\item
Finally, the input and output constraints are static again.
\end{enumerate}
This allows us to formulate the following two branching lemmas.

\begin{lemma}\label{lem:branching-1}
	Any valid history state for the given transition rules is of size $\BigO(\poly({W,D,H}))$, where $W$, $D$ and $H$ are the cuboid's width, height and depth.
\end{lemma}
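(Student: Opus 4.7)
The plan is to decompose the state count into (i) the length of the principal computation path and (ii) the number of spurious branch legs attached to it, then argue both are polynomial. Since a history state of a simple connected ULG is supported on the vertex set of that graph, bounding $|\mathcal V|$ bounds the size of $\ket\Psi$ up to a constant factor accounting for the quantum register dimension (which here is fixed, independent of $W,D,H$).

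First I would bound the main branch. A valid configuration has precisely one head on the computational layer at height $h\in\{1,\dots,H\}$. On each layer, the head must traverse the perimeter of length $2(W+D)$; the face-moving rules of \cref{sec:moving-constraints} advance the head by one lattice site per transition, with only a constant number of intermediate sub-steps, and the computation transitions of \cref{sec:comp} introduce at most three extra intermediate configurations (the temporary \textcolor{red}! marker plus the gate application). Once the head returns to the computation edge, it descends one layer, so the total number of forward steps on the principal path is $O(H(W+D))$, which is already $O(\poly(W,H,D))$.

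Next I would bound the branching. As observed in the case analysis of \cref{sec:branching}, the face-moving and computation transitions are unambiguous, the static constraints do not contribute transitions, and the input/output constraints are also static. The only genuinely ambiguous rule is the corner-turning rule, which, on a non-corner face, can still fire and deflect the arrow one step into the bulk. The crucial observation to verify is that \emph{no} further forward rule applies once the arrow sits in the bulk: the face-moving rules require the neighbouring red spins to be in state \inline\Ro, which by \cref{sec:constraining-bulk} occurs only adjacent to the outer layers, and the corner-turning rule cannot fire a second time from a bulk position for the same reason. Consequently each branch leg has length $O(1)$ and consists of at most a single extra configuration per site of the main path. Summing over the $O(H(W+D))$ positions on the principal branch multiplies the state count by a constant, preserving the polynomial bound.

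The main obstacle is the second step: one must do a careful but finite case check to confirm that branch legs truly terminate after one (or at most $O(1)$) deflecting moves, so that branches do not compound into a tree of super-polynomial size. Once that case check is in place, combining the two bounds yields $|\mathcal V|=O(\poly(W,D,H))$, and therefore any valid history state has size $O(\poly(W,D,H))$, establishing the lemma.
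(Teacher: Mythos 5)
Your proof is correct and follows the same approach the paper takes: it reproduces the paper's terse justification (the head performs at most $\BigO(H(W+D))$ unique forward transitions) and expands it using exactly the branching analysis given in the text of \cref{sec:branching} (corner rules deflect the arrow one step into the bulk, no further forward rule applies there, so each spurious leg has constant length and only multiplies the state count by a constant). The paper's own stated proof is a one-liner relying on that surrounding discussion, and your reconstruction fills in the same details the paper spreads across \cref{sec:branching} and \cref{lem:branching-1}.
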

\begin{proof}
	Follows by construction; the head can perform at most $\BigO(H\times(W+D))$ unique transitions.
\end{proof}
\begin{lemma}\label{lem:branching-2}
	In case there is more than one head symbol (i.e.\ \textcolor{red}{\textup!}, \Sr or \Sl) present, the minimal valid evolution splits up into $\poly$-sized slices, each of which carries at least one penalty from two directly adjacent heads.
\end{lemma}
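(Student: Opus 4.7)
The plan is to analyse the structure of the configuration graph restricted to multi-head configurations. Since none of the dynamic rules in \cref{sec:dynamic-constraints} creates or destroys a head symbol (each rule either translates an arrow by one site, swaps an arrow with a qubit, or consumes/produces the marker \textcolor{red}{!} while leaving the total head count fixed), the number of heads is a conserved quantity, and configurations with $\ge 2$ heads form an invariant subset under the transitions. I would first establish that each head, considered in isolation, follows an essentially deterministic trajectory on the cube (up to the benign corner sidetracks of \cref{lem:branching-1}): it circulates around its current layer in a fixed sense, eventually drops one layer, and at the computation edge may contribute a short gate-application or direction-flip branch.

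Next, I would prove the kinematic core of the lemma: in any maximal evolution inside the multi-head component, two heads must become lattice-adjacent within $\BigO(\poly(W,H,D))$ transitions. Each head advances exactly one lattice site per non-sidetrack transition, while both the layer circumference $W+D$ and the height $H$ are polynomial. Two co-oriented heads on the same layer preserve their circular spacing and thus are adjacent at least once per revolution; two counter-oriented heads close their gap by two per step and must collide; heads on different layers each descend after every revolution, so after at most $H$ full descents two of them share a layer and reduce to the previous cases. The fronts of the gate-application and direction-reversal rules at the computation edge cannot break this, because they act strictly locally and do not allow a head to ``skip'' over another.

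Combining these two facts, any path in the configuration graph of a multi-head component meets a configuration with two directly adjacent heads at polynomially-spaced intervals. Defining a \emph{slice} to be a maximal sub-path of the evolution between two consecutive such penalty configurations then yields the claimed decomposition: each slice has size $\BigO(\poly(W,H,D))$ and carries at least one multiple-heads penalty from \cref{sec:dynamic-constraints}. This is exactly what is needed so that the penalty operator $\op P$ appears at positive frequency on any history-branch living in the multi-head sector, which will feed into the application of \cref{lem:kitaev-ulgs}.

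The main obstacle I expect is careful bookkeeping around the computation edge and cube corners: the rules there flip head direction, temporarily replace program bits by \textcolor{red}{!}, and couple arrow motion to qubit swaps, so one must check that no head can ``evade'' an oncoming head by hiding in a mid-transition configuration. The secondary subtlety is the corner branching of \cref{lem:branching-1}, which produces short sidetracks into the bulk; since each such sidetrack has bounded length and is immediately retracted by the reverse rule, it only inflates the size of each slice by a constant factor and does not affect the polynomial bound.
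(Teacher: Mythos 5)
Your proposal takes a genuinely different route from the paper, and it has a gap. The paper's proof is a one-liner appealing to the standard argument from \cite{Bausch2016}: fix all but one head symbol (i.e.\ remove the transition rules that move the other heads, which is exactly the operator $\op H_\prop'\le\op H_\prop$ used later in the main proof), and observe that the one remaining head deterministically sweeps the full surface of the cuboid, so within $\BigO(\poly(W,H,D))$ steps it must come adjacent to one of the now-static heads. Your approach instead tries to track all heads moving simultaneously via a kinematic case analysis, and this is where the trouble is.

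First, the picture of both heads advancing ``exactly one lattice site per transition'' is not how the ULG works: each transition rule is local and moves \emph{one} head, so a path through the configuration graph advances the heads one at a time, and there is branching over which head moves. Your relative-speed reasoning doesn't carry over to this setting. Second, even granting a synchronised reading, the co-oriented case is wrong as stated: two heads on the same layer moving in the same rotational sense and dropping at the computation edge at the same rate preserve their circular spacing indefinitely and need never become adjacent; the claim that they are ``adjacent at least once per revolution'' does not follow from preserving spacing. You'd need an extra mechanism (e.g.\ one head's drop being delayed by a gate application while the other's is not) to desynchronise them, and that is exactly the kind of fragile bookkeeping the frozen-head argument is designed to avoid. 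The paper's approach sidesteps both issues: by keeping all other heads static, there is no co-oriented escape, no branching over which head moves, and the polynomial bound is simply the size of the cuboid's surface. If you want to salvage your route, you would need to argue either that the frozen-head reduction is implicit in the meaning of ``minimal valid evolution'', or do a much more careful analysis showing the ULG restricted to multi-head configurations has diameter bounds that force adjacency along \emph{every} long path, which is considerably harder than the argument the paper gives.
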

\begin{proof}
	The argument is the same as in \cite{Bausch2016}. One can keep all but one of the head symbols fixed; the one left free to move is necessarily meeting another head symbol within $\poly$ many steps.
\end{proof}

\section{QMA-Hardness Proof of Main Theorem}\label{sec:proof}
In this section, we provide a rigorous proof of \cref{th:main}.
Using statical constraints and dynamic rules as in \cref{eq:transition-term,eq:history-state-ham}, we translate the transition rules defined in \cref{sec:cube} into a Hamiltonian $\op H_\text{prop}$, which is geometrically 4-local by construction.

We want to point out that the Hilbert space structure of this lattice Hamiltonian $\op H_\prop$ is not a product space between clock and computation space $\mathcal H_\text{clock}\otimes\mathcal H_\text{comp}$, which would result in a ground state of the standard history state form $\sum_{t}\ket t\ket{\psi_t}$.
The reason for this is that depending on which sub-lattice a spin sits on, its local Hilbert space $\mathcal H_\text{loc}=\field C^4$ decomposes differently.
The red and green spins can be regarded as being completely in the clock space, as all transition rules which act on them are completely classical, i.e.\ they never move any of the red and green spins out of a computational basis state.
The black spins, however, decomposes into a direct sum $\mathcal H_\text{clock}\oplus\field C^2$, the latter space carrying a qubit, and the clock part being reserved for the two arrow symbols \Sl and \Sr, which are part of the clock.

In order to analyse the spectrum, we note that there exists an isometric transformation between our Hamiltonian and Hilbert space, and one which respects the product space structure, which in particular will allow us to regard the Hamiltonian as a ULG Laplacian and apply \cref{lem:kitaev-ulgs}.
Let us be precise at this point,
and use the recently-developed Quantum Thue System terminology defined in \cite[sec.\ 6]{Bausch2016}.
With this new machinery, we can state the following lemma.
\begin{lemma}
The transition rules in \cref{sec:cube} define a Quantum Thue System, and the induced ULG is simple.
\end{lemma}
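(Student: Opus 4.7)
The plan is to verify the two parts of the claim separately: first that the transition rules fit the Quantum Thue System (QTS) formalism of~\cite[sec.~6]{Bausch2016}, and second that the induced ULG is simple in the sense of \cref{def:ulg} (product of unitaries along every closed loop equals the identity).

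For the QTS part, I would tabulate each transition rule introduced in \cref{sec:moving-constraints,sec:comp} and cast it in the form of \cref{eq:transition-term}. All moving rules on a face, around a corner, and the two input/output rules are purely classical: the associated unitary factor on the quantum register $\H_q=\field C^2$ carried by the black sublattice qubits is simply $\1$, and the rule only reshuffles the classical symbols in $\{0,1,\Sl,\Sr,\textcolor{red}!,\text{program bits},A,B,C\}$. The only rules with a nontrivial unitary component are the gate-application rules at the computation edge, where we apply $\op G$ or $\op G^\dagger$ to the two adjacent qubits. Since the classical states $\ket a,\ket b$ in \cref{eq:transition-term} are taken from a fixed finite alphabet that is invariant under every rule, and since the target $\ket b$ is uniquely determined by the source $\ket a$ together with the location of the rule, this is exactly the QTS structure as required.

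For simplicity of the ULG, I would argue that the underlying classical transition graph on the set of valid configurations contains no nontrivial cycles, so that simplicity holds vacuously. Concretely, I would introduce a \emph{potential function} $\Phi$ on valid configurations that strictly increases along every forward transition: roughly, $\Phi$ counts (i)~the current layer-depth of the unique head symbol on the computational sublattice, plus (ii)~its signed angular position around the cube perimeter, with a small correction for the transient states inside a gate-application sub-sequence and for the one-step corner branches of \cref{lem:branching-2}. The moving rules on a face strictly increase the angular position, the corner rules decrement the layer-depth, and the gate rules are the only ones that leave the classical head position essentially unchanged but still advance $\Phi$ through their intermediate computational-corner marker $\textcolor{red}!$. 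The dead-end branches where an arrow is pushed into the bulk (\cref{sec:branching}, item 2) lead to chains of at most two classical states with no further forward rule applicable. Hence every connected component of the forward/backward graph is a (possibly branching) tree rooted at the halting configuration, so the underlying undirected graph is a forest and carries no closed loops at all.

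Given this, every loop condition on the ULG is trivially satisfied, and the ULG associated to the transition rules is simple. The main obstacle I expect is bookkeeping: one must verify that the potential $\Phi$ really is strictly monotone across \emph{all} rule types simultaneously, including the three-stage gate-application sequence (\textsc{gate} and \textsc{inv} of \cref{fig:circuit-to-desc}) where the classical head label cycles $\Sr\!\to\!\textcolor{red}!\!\to\!\Sr$, and that no two distinct rule applications at different lattice sites can ever map the same valid source configuration to the same valid target in a way that would glue two branches of the tree into a genuine cycle. This requires one to use the static constraints from \cref{sec:static-constraints} to rule out, for every rule, the coincidence of a forward application at one site with a backward application at another; the key fact here is that each rule is localized and each site of the cF lattice is distinguished by the unique pattern of \inline\Rx versus \inline\Ro neighbours enforced by $\Hstat$, so any simultaneous match would already violate the static constraints and is thus excluded from the ULG on valid configurations.
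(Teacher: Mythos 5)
Your proposal takes a different and more self-contained route than the paper, which settles simplicity in one line by invoking \cref{lem:branching-1} together with \cite[def.~51, lem.~52 and 53]{Bausch2016}. Your QTS part is fine and essentially matches the paper's terse ``simple re-ordering of the spins.'' The simplicity part, however, has a logical gap at the ``Hence'': a potential function $\Phi$ that strictly increases along every \emph{forward} transition only rules out directed cycles. It does \emph{not} rule out the underlying \emph{undirected} graph containing a cycle, which is what simplicity of the ULG is about. Concretely, a monotone $\Phi$ is perfectly compatible with a diamond: $\ket{\psi_a}\to\ket{\psi_b}$ and $\ket{\psi_a}\to\ket{\psi_c}$ with both $\ket{\psi_b}\to\ket{\psi_d}$ and $\ket{\psi_c}\to\ket{\psi_d}$, giving a genuine undirected 4-cycle despite $\Phi$ increasing by 1 at every edge. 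Since the gate-application transitions carry nontrivial unitaries, such a diamond would be exactly the kind of loop one must check for identity transport, so it cannot be dismissed ``vacuously.''

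You do notice this in the last paragraph (``no two distinct rule applications\ldots can ever map the same valid source configuration to the same valid target''), but calling it ``bookkeeping'' understates its role: proving this non-confluence property \emph{is} the proof that the graph is a forest, and the potential function neither establishes it nor substitutes for it. The structural fact you actually need — and which the paper offloads to the referenced lemmas of \cite{Bausch2016} — is that the only branching is the dead-end type of \cref{sec:branching}, item 2, so every ambiguity adds a pendant vertex rather than merging two paths. If you want a direct argument along your lines, drop the potential function as the main tool and instead argue that every valid configuration with a single head symbol admits at most one forward transition \emph{along the main line} plus at most one dead-end branch, and that the dead-end target admits no further forward transition; that directly gives a caterpillar tree, hence a forest, hence trivial simplicity. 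Alternatively, cite the Quantum Thue System machinery as the paper does and let \cite{Bausch2016} absorb the combinatorics.
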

\begin{proof}
Verifying that the rules define a Quantum Thue System is straightforward by a simple re-ordering of the spins. Simplicity of the corresponding unitary labelled graph follows from \cref{lem:branching-1}; we refer the interested reader to \cite[def.\ 51, lem.\ 52 and 53]{Bausch2016}.
\end{proof}

\noindent
Without further ado, we now proceed to the proof of \cref{th:main}, which we re-state here in a rigorous, but concise fashion.
\begin{theorem}
$\kdHam[4][4]$ is \QMAEXP-complete.
\end{theorem}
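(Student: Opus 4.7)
\noindent
The plan is a two-part argument: containment in \QMAEXP and \QMAEXP-hardness. Containment is largely standard: given an instance, the verifier can prepare a witness state purporting to be the ground state of $\op H$, cut it into its local terms, estimate the energy of each via phase estimation, and accept if the sum falls below $\alpha$. Since $\op H$ has exponentially many terms but each has polynomially bounded norm and can be described with poly-many bits, the verifier runs in \BQEXP. The bulk of the theorem is the hardness direction, and the plan is to stitch together the machinery already developed.

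First I would fix an arbitrary \QMAEXP promise problem $\Pi$ and an instance $l$. Invoking \cref{lem:qmaexp-qrm} yields an $\exp$-time terminating QRM verifier for $\Pi$ with head $\op R$ acting on two qudits of dimension $d = \BigO(\poly|l|)$. Using \cref{lem:single-gate-universality} together with Solovay-Kitaev applied to $\op G$ and $\op G^\dagger$, I approximate $\op R$ by a circuit $C_{\op R}$ over the gates of \cref{fig:circuit-to-desc} with error at most $\delta/t$, where $t$ is the runtime of the QRM. Then \cref{lem:which-cube} produces cube parameters $W, H, D$ with $D$ encoding the binary description of $C_{\op R}$, $W$ chosen so $W+D$ is an integer multiple of $\lceil\log_2 d\rceil$ and accommodates the ring plus witness, and $H = 2t(W+D)$ providing enough layers for the full verifier run. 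By \cref{rem:cube-accuracy}, I can demand accuracy $\delta = \Omega(1/\exp\poly|l|)$, which is sufficient for the promise gap.

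Next I would assemble the Hamiltonian $\op H = \Hstat + \op H_\prop + \op P + \Pi_\text{in} + \Pi_\text{out}$ exactly as described in \cref{sec:cube}, checking it is 4-local, translationally-invariant, and that the local dimension is $4$. The $\Hstat$ piece, being diagonal and classical, is analyzed first: its ground space $S_0$ is spanned by product states in which the red sublattice realizes the binary counter and the periodically wound program, the green sublattice displays the triples $(0,p_1),(p_1,p_2),\ldots$ encoded via $f$, and the black sublattice is unconstrained. Any violation of these classical constraints costs at least $1$, which is much larger than the inverse-polynomial contributions from $\op H_\prop$, so by a standard Schrieffer–Wolff / Kitaev projection argument the low-energy sector of $\op H$ is captured by the restriction $\op H\big|_{S_0}$ up to an error of order $\BigO(\|\op H\|^2/1)$; choosing the $c_i$ so that $\Hstat$ dominates suffices.

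Within $S_0$, the propagation Hamiltonian $\op H_\prop$ is an associated Hamiltonian of a unitary labelled graph as established in the preceding lemma, and the branching is benign: \cref{lem:branching-1,lem:branching-2} show each connected component has $\poly$-size and any configuration with multiple heads picks up a two-adjacent-head penalty from $\op P$, which we scale to dominate. In a single-head component, the rules of \cref{sec:dynamic-constraints} implement, step by step, the QRM evolution encoded by the exposed program on the computation edge. Invoking \cref{lem:kitaev-ulgs} with the input penalty $\Pi_\text{in}$ (ancilla initialization) and output penalty $\Pi_\text{out}$ (accept indicator on the bottom layer), the minimum eigenvalue of $\op H|_{S_0}$ is $\Omega(1/|\set V|^3) \cdot \mu$, where $\mu$ is determined by the maximum acceptance probability of the QRM on any witness. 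For a \yes-instance there exists a witness giving acceptance probability $\ge 1 - 2^{-p(|l|)}$ (after amplification via \cref{rem:probability-amp}), yielding $\lmin(\op H) \le -\Omega(1/\poly|\Lambda|)$; for a \no-instance every witness is rejected with high probability, so $\lmin(\op H) \ge 0$. Setting $\alpha,\beta$ across this gap completes the reduction.

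The main obstacle will be controlling the interaction between the Solovay-Kitaev approximation error, the branching-induced inflation of history components, and the Kitaev-lemma promise gap simultaneously. Concretely, the $1/|\set V|^3$ factor from \cref{lem:kitaev-ulgs} combined with the $\exp\poly|l|$ scaling of $|\set V| = W\cdot H\cdot D$ forces the approximation error per gate to be at most inverse-exponential in $|l|$; \cref{rem:cube-accuracy} shows this is achievable, but the bookkeeping of matching $\mu$, the S-K precision, the redundant $\ket{\psi'}$ branches described in \cref{sec:branching}, and the scaling of $\op P$ against $\op H_\prop$ requires care to ensure none of these effects close the promise gap below $1/\poly|\Lambda|$.
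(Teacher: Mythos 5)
Your overall outline matches the paper's route — containment by energy estimation, hardness via the QRM embedding, \cref{lem:which-cube} for the cube parameters, ULG simplicity, and \cref{lem:kitaev-ulgs} — but the final spectral bookkeeping has two linked gaps that would break the proof as written.

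First, the zero-head sector is never addressed. Your case analysis covers valid history states (one head) and configurations with multiple heads, but a basis state of $\Hstat$'s ground space with \emph{no} head symbol at all incurs no $\op P_\text{heads}$ penalty, is annihilated by $\op H_\prop$, and (for generic computational-layer data) satisfies the input constraint. Such states have exactly zero energy regardless of whether $l$ is a \yes{} or a \no{} instance, so the promise gap cannot be established with the Hamiltonian $\Hstat + \op H_\prop + \op P + \Pi_\text{in} + \Pi_\text{out}$ alone. The paper resolves this by adding a $1$-local bonus term $\op B = -g(|l|)\ketbra{\text{head}}$ that rewards the presence of a head; a head-counting ``between-symbols'' regular-expression penalty would also work but costs local dimension.

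Second, and as a direct consequence, your claim that for a \yes-instance $\lmin(\op H) \le -\Omega(1/\poly|\Lambda|)$ does not follow from \cref{lem:kitaev-ulgs}. That lemma gives a \emph{lower} bound $\lmin \ge \mu\,\Omega(1/|\set V|^3)$, which is non-negative since $\mu\ge 0$. For \yes-instances the paper instead exhibits the explicit history state $\ket{\Psi_l}$ and computes $\bra{\Psi_l}\op H\ket{\Psi_l} \le 2f(|l|)/T$, a small \emph{positive} number; negativity only appears after the bonus $\op B$ is added (choosing $g=2/A$ shifts the \yes-case below zero while, by a crude $W\cdot H\cdot D$ bound on the head count, the \no-case stays $\ge 0$). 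Your sign convention — tying $\mu$ to ``maximum acceptance probability'' and then concluding a negative lower bound — conflates the two directions of the promise. Separately, your Schrieffer--Wolff appeal to restrict to the ground space of $\Hstat$ is unnecessary and slightly misleading: $\Hstat$ is diagonal and the transition rules never take a state out of its $\Hstat$-eigenspace, so $\op H$ is genuinely block-diagonal and the restriction to $S_0$ is exact, not perturbative.
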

\begin{proof}
Containment in \QMAEXP is straightforward, cf.~\cite{Bausch2016}. To show that the Hamiltonian instances of the cube construction define a \QMAEXP-hard family, we will employ techniques proven there which should simplify the analysis.

Let $\L=(\Lyes,\Lno)$ be a \QMAEXP promise problem, as in \cref{def:qmaexp}. By \cref{lem:which-cube}, we know that we can pick a constant error threshold $\delta>0$ such that for any instance $l\in\L$ there exists a cube which allows a verifier circuit for this instance to be executed on the sides. Since we will require probability amplification later on in the proof (\cref{rem:probability-amp}), we set $\delta=f(|l|)$ for some function $f$ to be specified later, and also assume that the original verifier's acceptance probability is $\epsilon_l\le f(|l|)$.

We translate all static and dynamic penalties into a Hamiltonian as explained in \cref{eq:history-state-ham}, and denote the corresponding Hamiltonian operator with
\[
    \op H=\op P+\op H_\prop=\op P_\text{in} + \op P_\text{out} + \op P_\text{static} + \op P_\text{heads} + \op H_\prop,
\]
where $\op P_\text{static}$ comprises all static constraints for the cube (cube structure, binary counter and winding of program), $\op P_\text{heads}$ penalizes any two head symbols next to each other, and such that $\op P_\text{in/out}$ represent the input and output penalties, respectively.

\paragraph{Soundness.}
We first regard the case when $l\in\Lyes$. Denote with $\ket{\Psi_l}$ the valid history state, i.e. the unique uniform superposition ground state of $\op H_\prop$ started out in a valid initial configuration with a single left-moving head in the top left row, and such that no initial or static penalty is violated. Then
\begin{align*}
	\bra{\Psi_l}\op H\ket{\Psi_l} = & \bra{\Psi_l}\op P_\text{in}\ket{\Psi_l}      & =0 & \quad\text{(because $\ket{\Psi_l}$ satisfies all input constraints)} \\
	                                +& \bra{\Psi_l}\op P_\text{out}\ket{\Psi_l}     &    &  \\
	                                +& \bra{\Psi_l}\op P_\text{static}\ket{\Psi_l}  & =0 & \quad\text{($\ket{\Psi_l}$ is valid history state)}                  \\
	                                +& \bra{\Psi_l}\op P_\text{heads}\ket{\Psi_l}   & =0 & \quad\text{($\ket{\Psi_l}$ has one active head symbol)}                \\
	                                +& \bra{\Psi_l}\op H_\prop\ket{\Psi_t}           & =0 & \quad\text{(since $\ket{\Psi_l}$ ground state of $\op H_\prop$)}.
\end{align*}
What remains to be analysed is the output penalty $\bra{\Psi_l}\op P_\text{out}\ket{\Psi_l}$. If we write
$\ket{\Psi_l} = \frac1{\sqrt T}\sum_{t\in T}\ket t\ket{\psi_t}$
where $T$ is the normalization constant for the history state (i.e.\ the number of unique vertices in the ULG evolution represented by $\ket{\Psi_l}$), which we know by \cref{lem:branching-1} to be $T=\BigO(\poly(W, D, H))$---i.e. the number of computational steps taken, including branching, cannot be larger than a polynomial in the cube width, depth and height.
Then 
\begin{align*}
\bra{\Psi_l}\op P_\text{out}\ket{\Psi_l} 
&= \frac1T\left(
    \sum_{t,t'\in T}\bra t\bra{\psi_t}\ketbra T\otimes\Pi_\text{out}\ket{t'}\ket{\psi_{t'}}
\right)
\\
&=
\frac1T \bra{\psi_T}\Pi_\text{out}\ket{\psi_T}
=
\frac1T\mathds P(\text{circuit rejects})
\le \frac1T(\epsilon_l + \delta)=\frac{2f(|l|)}T.
\end{align*}

\paragraph{Completeness.} If $l\not\in\Lyes$, we have to show that for any $\ket\psi$, $\bra\psi\op H\ket\psi$ is bounded away from the \yes-case by a $1/\poly$ gap.
If any static constraint is violated, we can immediately bound $\op H\ge1$. So we can assume that the state $\ket\psi$ is in a valid configuration.

Note that the number of head symbols \Sr, \Sl or \textcolor{red}! is always preserved for any transition rule.
This means that $\op H_\prop$---and therefore also $\op H$---is block-diagonal in the static cube configuration and the number of head symbols on the computational layer, we can regard each case separately.

\begin{enumerate}
\item
In case of multiple head symbols we observe that each head necessarily sweeps the entire surface of the cube.
Mark an arbitrary head symbol, and define $\op H_\prop'$ to be $\op H_\prop$ with any transitions for the other heads removed.
Any such transition rule as in \cref{eq:transition-term} is positive semi-definite, which necessarily means $\op H_\prop\ge\op H_\prop'$ (spectrum wise, by which we mean $\op H_\prop-\op H_\prop'$ is psd itself).
This new operator $\op H_\prop'$ might be non-local, but we only need it to lower-bound the spectrum of $\op H_\prop$.

The marked head symbol will then encounter another head in at most $\poly(W, H, D)$ many steps (it cannot take longer than visiting the entire surface of the cuboid, cf.\ \cref{lem:branching-2}).
At that point, it will pick up a penalty. Utilizing our variant of Kitaev's lemma (\cref{lem:kitaev-ulgs}), we conclude
\[
    \op H=\op P_\text{in/out}+\op P_\text{static}+\op P_\text{heads} + \op H_\prop
    \ge \op P_\text{heads} + \op H_\prop'
    \ge \Omega(1/\poly(W, H, D)).
\]
We thus need to set $f$ to a function which allows a polynomial separation (in the system size) between \yes and \no instance; by \cref{rem:cube-accuracy}, this is always possible.
\item
The same argument lets us bound $\op H\ge \op P_\text{in/out}+\op H_\prop=\Omega(1/\poly(W, H, D))$ in case of a single head valid history state since $l$ is a \no-instance.
\item
What remains to be analysed is the zero head case. There are two standard approaches: we can either increase the number of symbols on the computational layer, such that on one side of a head---i.e. behind it in direction of the computation---we take one kind of symbols, and on the other side we take the other set; constructions like this can be constrained by a regular expression (without repetition of symbols, cf. \cite[lem. 5.2]{Gottesman2009}) and thus penalized with local terms.

Since our benchmark tries to reduce the local dimension of the system, we instead add a bonus term $\op B$ to the Hamiltonian $\op H$, and such that $\op B\ket\psi=-g(|l|)\times h\ket\psi$ where $h$ is the number of head symbols for any basis state $\ket\psi$ of $\op H$, and $g$ is a function chosen such that there is again a $1/\poly$ separation between the zero head state and the ground state for \yes-instances, but such that multiple head configurations stay bounded away from $0$.
It is clear that $\op B$ can be implemented by a 1-local term of the form $-g(|l|)\ketbra{\text{head}}$.

To be more precise and to determine how quickly $g$ has to grow, assume that the construction up to now satisfies $\lmin\le1/A$ for the \yes case, and $\lmin\ge1/B$ in the \no case (excluding zero heads), where $B=\BigO(\poly(W, H, D))$, and $A\ge 4B\times W\times H\times D$. Choose $g=2/A$; since there can be at most $W\times H\times D$ heads on the cuboid's faces, we obtain the bounds
\begin{align*}
    \lmin\begin{cases}
        \le 1/A - 2/A = -1/A & \text{if } l\in\Lyes \\
        \ge 1/B - 4(W\times H\times D)/A \ge 1/B - 1/B \ge 0 & \text{otherwise, and for at least 1 head.}
    \end{cases}
\end{align*}
The zero head case can then easily be lower-bounded by $\op H\ge0$.
\end{enumerate}
We have thus shown a promise gap of $1/\poly$ in the system size: for $l\in\Pi_\yes$, $\op H\le-\Omega(1/\poly(W, H, D))$, and $\op H\ge0$ otherwise.
The claim of \cref{th:main} follows.   
\end{proof}

\end{document}